\title{On Extremal Properties of \texorpdfstring{\boldmath$k$}{k}-CNF:\texorpdfstring{\\}{ }Capturing Threshold Functions}
\titlerunning{On Extremal Properties of \boldmath$k$-CNF: Capturing Threshold Functions}
\author{Mohit Gurumukhani}{Cornell University, Ithaca, USA \and \url{https://www.mohitgurumukhani.com/}}{mgurumuk@cs.cornell.edu}{}{}
\author{Marvin K\"{u}nnemann}{Karlsruhe Institute of Technology, Germany \and \url{https://act.iti.kit.edu/people/marvinkuennemann}}{marvin.kuennemann@kit.edu}{}{}
\author{Ramamohan Paturi}{University of California San Diego, San Diego, USA \and \url{https://cseweb.ucsd.edu/~paturi/}}{rpaturi@ucsd.edu}{}{}
\authorrunning{M. Gurumukhani, M. K\"{u}nnemann and R. Paturi}
\keywords{Circuit lower bounds, k-CNF, Representations of threshold functions, Tur\'{a}n problem}
\begin{document}

\maketitle

\begin{abstract}
We consider a basic question on the expressiveness of $k$-CNF formulas: How well can $k$-CNF formulas capture threshold functions? Specifically, what is the largest number of assignments (of Hamming weight $t$) accepted by a $k$-CNF formula that only accepts assignments of weight at least $t$?

Among others, we provide the following results:
\begin{itemize}
\item While an optimal solution is known for $t \leq n/k$, 
the problem remains open for $t > n/k$. We formulate a (monotone) version of the problem 
as an extremal hypergraph problem and show that for $t = n-k$, the problem is exactly the Tur\'{a}n problem. 
\item For $t = \alpha n$ with constant $\alpha$, we provide a construction and show its optimality for $2$-CNF. Optimality of the construction for $k>2$ would give improved lower bounds for depth-$3$ circuits. 
\end{itemize}
\end{abstract}

\section{Introduction} \label{sec:introduction}

\newcommand{\THR}{\mathrm{THR}}
\newcommand{\poly}{\mathrm{poly}}

For the currently best known worst-case upper bounds for $k$-SAT,
the savings over exhaustive search decreases rapidly as $k$, the clause width, increases.
In fact, the best known algorithms run in time $2^{(1-\Theta(1/k))n}$ \cite{PaturiPudlakSaksZane_2005_jacm,
 HansenKaplanZamirZwick_2019_stoc, SchederTalebanfard_2020_ccc}.
The possibility that the savings of $\Theta(1/k)$ is the best possible is known as the Super Strong Exponential Time Hypothesis (see~\cite{VyasWilliams_2019_sat}). 

The situation is `dual' with respect to exponential size lower bounds on 
depth-3 circuits, specifically, $\Sigma\Pi\Sigma_k$ circuits (disjunctions of $k$-CNFs). The best known lower bounds for $\Sigma\Pi\Sigma_k$ circuits for any explicit function
are of the form $2^{\Omega(\frac{n}{k})}$ \cite{PaturiSaksZane_2000_cc,PaturiPudlakZane_1999_cjtcs}.
Establishing circuit lower bounds for depth-3 circuits has a rich history, particularly due to a classical result of Valiant \cite{Valiant_1977_mfcs}, that shows that establishing lower bounds for depth-3 circuits with bottom fan-in $O(n^{\epsilon})$ implies nonlinear size bounds on  logarithmic-depth circuits. Valiant also shows that a lower bound on $\Sigma\Pi\Sigma_k$ circuits of $2^{cn}$ for $c>0$ independent of $k$ implies a nonlinear size lower bound on series-parallel circuits \cite{Valiant_1977_mfcs,Calabro_2008_eccc}. 
For a recent overview and results in this direction, see \cite{GolovnevKulikovWilliams_2021_itcs} and references therein. 

Combinatorial properties of the solution 
spaces of $k$-CNFs played a critical role in the development of the best known upper bounds for  $k$-SAT and lower bounds for depth-3 circuits.
We believe that a deeper understanding of the  combinatorial properties of solution sets of $k$-CNF is a productive approach to make progress on $k$-SAT 
as well as lower bounds on depth-3 circuits.
\cite{PaturiSaksZane_2000_cc,PaturiPudlakZane_1999_cjtcs,GolovnevKulikovWilliams_2021_itcs}.
In this paper, we study a basic extremal property of the solution sets of $k$-CNF formulas, motivated by the question: How well can $k$-CNFs capture threshold functions? A \emph{threshold function} $\THR_t : \{0,1\}^n \to \{0,1\}$ is defined by $\THR_t(x_1, \dots, x_n) = 1$ if and only if $\sum_{i=1}^n x_i \ge t$. It is easy to see that not every $\THR_t$ can be written as a single $k$-CNF. We are interested in the following questions:
\begin{enumerate}
\item  What are the largest subsets of solutions of $\THR_t$  that can be captured by $k$-CNF (and monotone $k$-CNF)?
\item How many $k$-CNFs are required to express $\THR_t$ as a disjunction of $k$-CNFs? 
\end{enumerate}

Computation of threshold functions in various circuits models has been studied by several researchers and there are a number of results on the 
bounds for computing threshold functions.
In applications involving satisfiability solvers, researchers have studied how to encode threshold functions -- or, equivalently, \emph{cardinality constraints} -- as CNFs. In this context, however, there is usually no restriction on clause width, the objective is typically to minimize the formula size (essentially, the number of clauses) and the use of auxiliary variables is usually allowed (see, e.g., \cite{Sinz_2005_cp,Marques-SilvaLynce_2007_cp}). Note that such cardinality constraints occur frequently, e.g., when solving MaxSAT instances by multiple calls to a SAT solver. 

Several
researchers~\cite{Boppana_1984_stoc,KlawePaulPippengerYannakakis_1984_stoc,Radhakrishnan_1994_comb,HastadJuknaPudlak_1995_cc,Wolfovitz_2006_ipl}
studied the total size of optimal $\THR_t$ representations as a
disjunction of CNFs (i.e., $\Sigma \Pi \Sigma$ circuits). These results
show (non-matching) lower and upper bounds for the second question
(e.g., that $\THR_{n/2}$ requires size between $2^{\Omega(\sqrt{n})}$
and $2^{O(\sqrt{n \log n})}$), yet these results translate to only weak
lower bounds for the first question (see below for details).

There are also several results regarding the representation of threshold functions as (probabilistic) polynomials, which have led to significant progress for SAT algorithms (see particularly~\cite{AlmanChanWilliams_2016_focs}).
However, representation of threshold functions as (probabilistic) polynomials is very different from representing them as disjunctions of $k$-CNFs.

To formalize our questions, for $t\geq 0$ we define $S(n,t,k)$ as the largest number of assignments of Hamming weight $t$ accepted by an $n$-variable $k$-CNF formula which does not accept an assignment of Hamming weight less than $t$.\footnote{One could also define a variant of this quantity where we seek to determine the largest number of assignments of Hamming weight \emph{at least $t$}, instead of \emph{exactly} $t$. In a sense that will be made formal in \cref{obs:kCNFcovering}, accepting assignments of weight exactly $t$ presents the central difficulty, so we focus on this version.} We shall see below that this quantity tightly controls how many $k$-CNF formulas are needed to cover a threshold formula. It might appear surprising that this questions is not yet resolved. For $t\le n/k$ there is an immediate construction that is proven optimal by Wolfovitz~\cite{Wolfovitz_2006_ipl}: divide the variables into $t$ blocks $B_1, \dots, B_t$ of size $k$, and a block $B_0$ of the remaining $n-tk$ variables. For each block $B_i$, introduce the clause $(x_1\vee \cdots \vee x_k)$ where $x_1, ..., x_k$ are the variables in $B_i$, and for each variable $x$ in $B_0$ introduce the unit clause $\bar{x}$. It is obvious that this formula accepts only assignments of weight at least $t$, and accepts precisely $k^t$ assignments of weight $t$, which is the best possible~\cite[Lemma 1]{Wolfovitz_2006_ipl}.  However, for larger $t$, Wolfovitz only gives a $k^t$ upper bound (which becomes trivial e.g. when $t\ge n/\log k$), without matching constructions.\footnote{\label{footnote:Wolfovitz} Note that Wolfovitz implicitly claims this construction to be optimal for all values of $t$. However, once $t\ge n/k$, we can no longer obtain $t$ \emph{disjoint} blocks of $k$ variables, and indeed, it turns out that this setting is combinatorially much more challenging.}

\subsection{Our Results}

Our first result demonstrates the difficulty of determining $S(n,t,k)$ by establishing a  connection to a classic open problem in combinatorics, specifically, the Tur\'an problem. 
The \emph{Tur\'an number} $T(n,q,k)$ denotes the size of the smallest system $\mathcal{S}$ of size $k$ sets ($k$-sets) over $[n]$ such that every $q$-set over $[n]$ contains some $S\in \mathcal{S}$. 
It turns out that for large thresholds $t=n-k$, $S(n,t,k)$ is determined by a corresponding Tur\'an number:  

\begin{theorem}\label{thm:turan1}
	For any $n\ge k \ge 2$, we have that $S(n,n-k,k)= \binom{n}{k} - T(n,k+1,k)$.
\end{theorem}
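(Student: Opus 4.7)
My plan is to prove the equality via matching bounds, both phrased through the correspondence between weight-$(n-k)$ assignments and their zero sets $Z \in \binom{[n]}{k}$. For any candidate $k$-CNF $F$, let $\mathcal{R} \subseteq \binom{[n]}{k}$ denote the collection of $k$-sets $Z$ whose corresponding assignment is \emph{rejected} by $F$; then the number of accepted weight-$(n-k)$ assignments equals $\binom{n}{k}-|\mathcal{R}|$, so the goal becomes to show $|\mathcal{R}| \geq T(n,k+1,k)$ with equality achievable.

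For the lower bound, I would start from a minimum $(n,k+1,k)$-Tur\'an system $\mathcal{T}$ and take $F = \bigwedge_{S \in \mathcal{T}} \bigvee_{i \in S} x_i$. Each monotone clause $\bigvee_{i\in S} x_i$ is falsified precisely by assignments whose zero set contains $S$, so a short case analysis on Hamming weight settles correctness: assignments of weight $>n-k$ have a zero set too small to contain any $S$ and are accepted; assignments of weight $<n-k$ have a zero set of size $\ge k+1$ and are rejected by the Tur\'an property of $\mathcal{T}$; weight-$(n-k)$ assignments are accepted exactly when their zero set is not in $\mathcal{T}$, yielding $\binom{n}{k}-T(n,k+1,k)$ acceptances.

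For the matching upper bound, I plan to show that the set $\mathcal{R}$ arising from any valid $F$ is itself an $(n,k+1,k)$-Tur\'an system. Fix any $(k+1)$-set $Q \subseteq [n]$: the assignment with zero set $Q$ has weight $n-k-1 < n-k$ and is therefore rejected by some clause $C = \bigvee_{i \in P} x_i \vee \bigvee_{j \in N} \bar{x}_j$ of $F$, which forces $P \subseteq Q$ and $N \cap Q = \emptyset$. Since $|P| \leq k < k+1 = |Q|$, I can choose any $k$-set $S$ with $P \subseteq S \subseteq Q$; then $S \cap N \subseteq Q \cap N = \emptyset$, so $C$ also falsifies the weight-$(n-k)$ assignment with zero set $S$, placing $S \in \mathcal{R}$ as a $k$-subset of $Q$. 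Hence every $(k+1)$-set contains some element of $\mathcal{R}$.

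The main conceptual hurdle I anticipate is handling non-monotone clauses in $F$: a priori, negative literals might allow more efficient coverage of low-weight assignments without contributing to $\mathcal{R}$, which would beat the Tur\'an bound. The argument above defuses this without any shifting or symmetrization --- any width-$k$ clause that rejects a $(k+1)$-zero set $Q$ has positive part of size $\le k < |Q|$, so it automatically also rejects at least one $k$-subset of $Q$, and the Tur\'an structure of $\mathcal{R}$ emerges for free.
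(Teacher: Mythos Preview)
Your proof is correct. Both directions are sound, and in particular your treatment of non-monotone clauses in the upper bound is clean: any width-$\le k$ clause falsified by the assignment with zero set $Q$ of size $k+1$ has its positive part contained in $Q$ and its negative part disjoint from $Q$, so it is also falsified by some $k$-subset of $Q$, forcing $\mathcal{R}$ to be a Tur\'an system.

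The paper takes a structurally different route to the upper bound. Rather than analysing the rejected set $\mathcal{R}$ directly, it first proves two normalisation lemmas: that an optimal $(n-k)$-admissible $k$-CNF can be taken monotone (\cref{lem:large-threshold-monotone}), and that, after removing redundant clauses, every clause has width exactly $k$ (\cref{lem:large-threshold-width-k}). Once the formula is in this canonical form, each clause rejects exactly one weight-$(n-k)$ assignment, so the number of clauses equals $|\mathcal{R}|$, and the complements of the clauses give a covering system witnessing $C(n,n-k,n-k-1)\le |\mathcal{R}|$ (equivalently a Tur\'an system for $T(n,k+1,k)$). Your argument bypasses both lemmas entirely: you never touch the formula, only the set of rejected weight-$(n-k)$ assignments, and the Tur\'an property of $\mathcal{R}$ falls out regardless of whether $F$ is monotone or has short clauses. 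This is more elementary and self-contained. On the other hand, the paper's detour yields the structural facts (monotonicity and exact width of optimal formulas at threshold $n-k$) as byproducts of independent interest, and makes the clause-count~$=$~Tur\'an-number identification explicit rather than going through $\mathcal{R}$.
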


We define the asymptotic Tur\'an numbers as $T(q,k) = \limsup T(n,q,k)/\binom{n}{k}$.
Unfortunately, the exact value is not known even for $T(4,3)$
(in fact, Erd\H{o}s posed a reward of \$500 for this question). 
It is known that $0.4383334 \leq T(4,3)\leq 4/9$ \cite{Keevash_2011}. However, 
closing the gap remains to be an open problem.
Thus determining $S(n,t,3)$ for all $t$ requires further progress on Tur\'an numbers.

Note that it is known that Tur\'an numbers have a close connection 
to CNF formula size 
(as opposed to the number of solutions) 
\cite{Sidorenko_1995_gc}.
We show that for the problem of determining $S(n,t,k)$  for $t=n-k$, 
minimizing the formula size is equivalent to maximizing the solution set (note that this is not necessarily true in general). 
\cref{thm:turan1}   demonstrates the underlying difficulty of determining $S(n,t,k)$ for all values of $t$ even for constant $k$. Similarly, for the regime of large clause widths (where $n-k$ is constant), we obtain a exact correspondence to Steiner systems, whose existence is extensively studied (\cref{sec:Steiner}).

Let us now consider the `middle' region, the case of  linear thresholds, i.e., $t = \alpha n$ for some $\alpha\in [0,1]$. An
obvious generalization of Wolfovitz's construction is to partition the variables into blocks of size at most $k$, and ensure that for each block, all satisfying assignment have at least $\alpha k$ ones. 
We show that this obvious construction is not optimal. We obtain a better construction by choosing block sizes adaptively based on the threshold. 

\noindent
\textbf{Adaptive block construction:}
For a clearer presentation, we assume that $b:=(k-1)/(1-\alpha)$ is an integer 
and it divides $n$. We divide the variables into blocks $B_1, \dots, B_{n/b}$ of size $b$. For each block $B_i$, we introduce monotone clauses $(x_1\vee x_2 \vee \cdots \vee x_k)$ where $x_1, \dots, x_k$ are distinct variables in $B_i$. It is straightforward to see (\cref{sec:bounds}) that in any satisfying assignment, each block $B_i$ must have at least $b-(k-1)$ variables set to $1$. Since by definition of $b$, we have $b-(k-1) = \alpha b$, each block indeed enforces an $\alpha$-fraction of its variables to be set to $1$. This yields a lower bound of
\[ S(n, \alpha n, k) \ge \binom{b}{\alpha b}^{n/b} \text{ whenever $b=(k-1)/(1-\alpha)$ divides $n$}. \]
Assume that $\alpha k$ is an integer and $k,b$ both divide $n$, then it follows from $b> k$ that this bound improves over $\binom{k}{\alpha k}^{n/k}$, the number of solutions of the obvious generalization of Wolfovitz's construction. We formulate the hypothesis that this construction is optimal:

\begin{conjecture}\label{conjecture}
	The adaptive construction is optimal: Let $k\ge 2, \alpha \in [0,1]$  such that $b =(k-1)/(1-\alpha)$ is an integer. Whenever $b$ divides $n$, we have
	\[  S(n, \alpha n, k) = \binom{b}{\alpha b}^{n/b}. \]
\end{conjecture}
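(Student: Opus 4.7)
The plan is to reformulate the conjecture as an extremal hypergraph problem and then attempt an entropy-based upper bound combined with a compression argument. As already exploited in \cref{thm:turan1}, counting weight-$t$ satisfying assignments of a $k$-CNF with no weight-$<t$ solutions reduces to the monotone case: the $k$-CNF becomes a $k$-uniform hypergraph $H$ on $[n]$, weight-$t$ satisfying assignments correspond bijectively to $(1-\alpha)n$-subsets of $[n]$ that contain no edge of $H$ (\emph{edge-free sets}), and the no-weight-$<t$ condition is equivalent to the covering property that every $((1-\alpha)n+1)$-subset of $[n]$ contains an edge of $H$. The adaptive construction corresponds to the disjoint union of $n/b$ complete $k$-uniform cliques of size $b$; its edge-free $(1-\alpha)n$-sets are exactly those picking $k-1$ vertices from each block, yielding the claimed $\binom{b}{k-1}^{n/b} = \binom{b}{\alpha b}^{n/b}$. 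The conjecture is then the assertion that this construction maximises the count of edge-free $(1-\alpha)n$-sets among all covering hypergraphs.

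The main approach I would try is Shearer's entropy inequality. Let $I$ be a uniformly random edge-free $(1-\alpha)n$-set and set $X_i = \mathbf{1}[i \in I]$; the target is $H(X_1,\dots,X_n) \le (n/b) \log \binom{b}{k-1}$. The natural step is to produce a uniform cover of $[n]$ by $b$-element subsets $B_1,\dots,B_m$ such that, on each $B_j$, the hypergraph $H[B_j]$ contains all $\binom{b}{k}$ many $k$-subsets of $B_j$. On such a saturated block one has $|I \cap B_j| \le k-1$, so the number of possible projections $X|_{B_j}$ is at most $\binom{b}{k-1}$, and Shearer's inequality (with a cover that is $r$-regular) directly yields the desired bound $H(X) \le (n/b)\log \binom{b}{k-1}$. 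To create the required block structure on an arbitrary extremal $H$ I would apply a systematic shifting/compression operator: starting from an extremal $H$, push edges to a lex-smallest canonical form while preserving the covering property and not decreasing the number of edge-free $(1-\alpha)n$-sets, and argue that any fully shifted extremal $H$ must decompose into cliques of size $b$.

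The part I expect to be the main obstacle is exactly this last reduction to block form. For $k=2$, the analogous statement says that among graphs on $n$ vertices with independence number at most $(1-\alpha)n$ the count of maximum independent sets is maximised by a balanced complete multipartite graph, which can be handled by a direct Zykov-symmetrisation/replacement argument — this presumably underlies the paper's mention that the $k=2$ case is settled. For $k\ge 3$, however, the corresponding rigidity for $k$-uniform hypergraphs is a genuine extremal-hypergraph problem: by \cref{thm:turan1}, already the boundary case $t=n-k$ of $S(n,t,k)$ is equivalent to determining Tur\'an numbers $T(n,k+1,k)$, whose exact values are notoriously open (e.g.\ for $T(4,3)$). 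Thus a general proof would need either to sidestep such hypergraph Tur\'an-type rigidity or to make substantial progress on these questions. A natural intermediate target is the regime where $\alpha$ is close to $1$ so that $b$ is small, since there the block structure can be attacked by direct case analysis combined with induction on $n/b$.
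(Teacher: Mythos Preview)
The statement you are attempting to prove is labeled \textbf{Conjecture} in the paper, and the paper does \emph{not} contain a proof of it. There is therefore no ``paper's own proof'' to compare your proposal against. More importantly, the paper's \emph{Related and Subsequent Works} paragraph records that \cite{Amano23Majority} has since shown the conjecture to be \emph{false} for $k=4$: a block construction with blocks of size $2k$ strictly beats the adaptive construction with blocks of size $b=(k-1)/(1-\alpha)$. Any general proof plan must therefore fail, and it is instructive to locate where yours does.

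There are two concrete gaps. First, your opening reduction ``counting weight-$t$ satisfying assignments \dots\ reduces to the monotone case'' is not established in the generality you need. The paper proves $S(n,t,k)=S^{+}(n,t,k)$ only for $k=2$ (\cref{lem:k2-monotone}) and for the single threshold $t=n-k$ (\cref{lem:large-threshold-monotone}); the general equality is explicitly listed as an open problem. You cannot invoke \cref{thm:turan1} to justify monotonicity for arbitrary $t$. Second, and more decisively, your compression step---shifting an extremal covering hypergraph until it decomposes into vertex-disjoint complete $k$-uniform cliques of size~$b$---cannot hold for $k=4$, since by Amano's counterexample the extremal hypergraph is \emph{not} of this form (it uses blocks of size $2k$, not $b$). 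Consequently the Shearer step, which requires a uniform cover by \emph{saturated} $b$-blocks, has no chance of producing the bound $\binom{b}{k-1}^{n/b}$ in general. Your own caveat that the block-rigidity reduction is ``the main obstacle'' is correct, but the obstacle is not merely technical: for even $k\ge 4$ the target statement is wrong, so no amount of case analysis or induction on $n/b$ in the ``$\alpha$ close to $1$'' regime will close it.
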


\paragraph*{The $k=2$ case.}

While we could not establish \cref{conjecture} for $k\geq 3$, 
we provide evidence for \cref{conjecture} by showing that the 
adaptive block construction is indeed optimal for $k=2$. 

\begin{theorem}[Optimality for $k=2$, Informal Version]
	For every $\alpha \in [0,1]$ and $n$ such that $b:= 1/(1-\alpha)$ is an integer dividing $n$, we have 
	\[ S(n,\alpha n, 2) = \binom{b}{\alpha b}^{n/b} = \left(\frac{1}{1-\alpha}\right)^{n(1-\alpha)}. \]
\end{theorem}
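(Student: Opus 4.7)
The lower bound follows from the adaptive block construction. For the upper bound, my strategy is to reduce a general $2$-CNF to a graph-theoretic extremal problem via the $2$-SAT resolution structure, and then invoke Zykov's theorem.

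For a monotone $2$-CNF $F = \bigwedge_{\{i,j\}\in E(G)}(x_i \vee x_j)$, the satisfying assignments are exactly the vertex covers of the graph $G$, so weight-$\alpha n$ solutions correspond bijectively to size-$(n/b)$ independent sets of $G$. The min-weight hypothesis translates to $\alpha(G) \leq n/b$, and Zykov's theorem (in its complement form: for graphs on $m$ vertices with $\alpha \leq r$, the number of size-$r$ independent sets is at most $(m/r)^r$, attained by a disjoint union of cliques of size $m/r$ when $r \mid m$) yields the bound $b^{n/b}$, matched precisely by the adaptive block construction.

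For a general $2$-CNF $F$, I pass to the resolution closure $\bar F$ (same satisfying assignments, but every $2$-clause implied by $F$ appears explicitly). After fixing any forced variables---say $f_0$ of them to $0$ and $f_1$ to $1$---I consider the residual formula $F^*$ on $n^* = n - f_0 - f_1$ variables with residual minimum weight $w^* = \alpha n - f_1$ and no unit clauses, together with its AA-graph $G_A^* = ([n^*], \{\{i,j\} : (x_i \vee x_j) \in F^*\})$. The key lemma is that $\alpha(G_A^*) \leq s^* := n^* - w^* = n/b - f_0$. Granted this, every weight-$\alpha n$ solution of $F$ corresponds to a size-$s^*$ independent set of $G_A^*$; Zykov then bounds the count by $(n^*/s^*)^{s^*}$, and a short computation---using that $g(t) := (1-t)\log((b-t)/(1-t))$ is decreasing on $[0,1)$ with $g(0) = \log b$---shows $(n^*/s^*)^{s^*} \leq b^{n/b}$ for all admissible $f_0,f_1$ (since additionally having $f_1 \geq 0$ only tightens the bound).

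The key lemma itself is proved by contradiction. Let $I$ be a maximum independent set of $G_A^*$ and suppose $I$ is not a satisfying zero-set of $F^*$. Then some non-AA clause is violated, either a BB-clause $(\bar x_i \vee \bar x_j) \in F^*$ with $i,j \notin I$, or an AB-clause $(\bar x_i \vee x_j) \in F^*$ with $i \notin I$ and $j \in I$. By maximality, every $v \notin I$ has a neighbor $w \in I$ in $G_A^*$, i.e., $(x_v \vee x_w) \in F^*$. Resolving the violated clause with these witnessing AA-clauses---once in the AB case, twice through both $i,j$ in the BB case---yields a new AA-clause $(x_w \vee x_{w'}) \in F^*$ with $w,w' \in I$: this is an edge of $G_A^*$ inside $I$, contradicting independence, unless $w = w'$, in which case we obtain a unit clause, contradicting the assumption that $F^*$ has no forced variables. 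This interplay between resolution closure, extremality of $I$, and the absence of unit clauses is the technical heart of the proof and the main obstacle in extending the monotone argument to general $2$-CNFs.
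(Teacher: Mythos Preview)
Your proof is correct and takes a genuinely different route from the paper's.

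\textbf{The paper's route.} The paper first transforms an arbitrary optimal $2$-CNF into an \emph{acyclic} one (by contracting strongly connected components in the implication graph), then into a \emph{monotone} one (by an edge-surgery argument on the implication graph that removes an edge $(x,Y)$ and shortcuts predecessors of $x$ to $Y$ and $x$ to successors of $Y$). Only then does it pass to a graph, identify weight-$t$ satisfying assignments with maximal independent sets of size $n-t$, and invoke the recent Song--Yao (2020) bound on the number of maximal independent sets of a fixed size.

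\textbf{Your route.} You bypass monotonization entirely: after taking the resolution closure and unit-propagating forced variables, you work with the AA-graph $G_A^*$ of positive--positive clauses only, and your key lemma shows (via two resolution steps and maximality of $I$) that a maximum independent set of $G_A^*$ already yields a satisfying assignment. This gives $\alpha(G_A^*)\le s^*$, so the size-$s^*$ independent sets are maximum and you can apply the classical Zykov bound $(n^*/s^*)^{s^*}$, then optimize over $(f_0,f_1)$ via the monotonicity of $g(t)=(1-t)\log\frac{b-t}{1-t}$ (indeed $g'(t)=1-\tfrac{1}{u}-\log u\le 0$ for $u=\tfrac{b-t}{1-t}\ge 1$).

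\textbf{Comparison.} The paper's transformation lemmas are constructive and explicitly exhibit a monotone optimal formula, so they directly yield $S^{+}(n,t,2)=S(n,t,2)$; your argument recovers this only a posteriori, since the adaptive block construction is itself monotone. On the other hand, your approach is more economical for the counting bound: it replaces two structural lemmas by a single resolution argument, and it uses Zykov (1949) rather than Song--Yao (2020). In fact, even within the paper's framework one could substitute Zykov for Song--Yao, because $t$-admissibility forces $\alpha(G)\le n-t$, so the size-$(n-t)$ independent sets are precisely the maximum ones --- an observation your approach makes explicit.
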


In fact, we can establish the exact values of $S(n,t,2)$ for all $t$ -- note that for $k=2$, the corresponding Tur\'an number $T(n,3,2)$ is known. For larger clause width $k\ge 3$, it is likely that we need to exploit properties unique to the linear threshold regime to resolve our main question without resolving the Tur\'an problem.

\paragraph*{Motivation: Circuit Lower Bounds}

Let $F(n,t,k)$ denote the smallest number of $k$-CNF formulas required to cover $\THR_t$, i.e., the smallest number $q$ such that $\THR_t =\bigvee_{i=1}^q \psi_i$ for some $k$-CNF formulas $\psi_1, \dots, \psi_q$. 
$F(n,t,k)$ is the minimum fan-in of the top OR gate
in a $\Sigma\Pi\Sigma_k$ circuit that computes $\THR_t$.
Note that each $\psi_i$ must not accept assignments of weights less than $t$ -- we call such a formula $t$-admissible. We observe that $F(n,t,k)$ is almost tightly controlled by $S(n,t,k)$ and the analogously defined $S^{+}(n,t,k)$, the  maximum number of weight-$t$ assignments accepted by a \emph{monotone} $t$-admissible $k$-CNF.

\begin{observation}\label{obs:kCNFcovering}
	\[ \frac{\binom{n}{t}}{S(n,t,k)} \le  F(n,t,k) \le \left \lceil \frac{\binom{n}{t}}{S^{+}(n,t,k)} \cdot n \right\rceil. \] 
\end{observation}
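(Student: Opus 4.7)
The plan is to prove the two inequalities separately; the lower bound is a direct pigeonhole on the weight-$t$ layer, while the upper bound uses permutations of an optimal monotone $t$-admissible $k$-CNF combined with a standard greedy set cover.

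For the lower bound, suppose $\THR_t = \bigvee_{i=1}^q \psi_i$ where each $\psi_i$ is a $k$-CNF. Since $\THR_t$ rejects every assignment of Hamming weight less than $t$, each $\psi_i$ must too, so each $\psi_i$ is $t$-admissible and accepts at most $S(n,t,k)$ weight-$t$ assignments. As $\THR_t$ accepts all $\binom{n}{t}$ weight-$t$ assignments, a union bound gives $q \cdot S(n,t,k) \ge \binom{n}{t}$, i.e., $F(n,t,k) \ge \binom{n}{t}/S(n,t,k)$.

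For the upper bound, the plan is to cover all weight-$t$ assignments with permutations of an optimal monotone $t$-admissible $k$-CNF; monotonicity will then automatically extend the cover to all weights greater than $t$. Let $\psi^*$ be a monotone $t$-admissible $k$-CNF with $S^{+}(n,t,k)$ weight-$t$ solutions, and for each permutation $\sigma$ of the variable indices let $\psi^*_\sigma := \psi^* \circ \sigma$. Each $\psi^*_\sigma$ is again monotone, $t$-admissible, and accepts $S^{+}(n,t,k)$ weight-$t$ solutions. By transitivity of the symmetric group on the weight-$t$ layer, a uniformly random $\sigma$ accepts any fixed weight-$t$ assignment with probability exactly $S^{+}(n,t,k)/\binom{n}{t}$. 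A standard greedy set cover argument (iteratively pick the $\sigma$ maximizing the fraction of still-uncovered weight-$t$ assignments accepted, which by averaging is at least $S^{+}(n,t,k)/\binom{n}{t}$) then yields permutations $\sigma_1,\ldots,\sigma_q$ with $q \le \lceil (\binom{n}{t}/S^{+}(n,t,k)) \cdot \ln \binom{n}{t} \rceil \le \lceil (\binom{n}{t}/S^{+}(n,t,k)) \cdot n \rceil$ whose associated CNFs together accept every weight-$t$ assignment.

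It remains to verify that $\bigvee_{i=1}^q \psi^*_{\sigma_i}$ equals $\THR_t$: assignments of weight less than $t$ are rejected by $t$-admissibility of each disjunct; weight-$t$ assignments are accepted by construction; and any $y$ of weight greater than $t$ dominates some weight-$t$ $x$, which is accepted by some monotone $\psi^*_{\sigma_i}$, forcing $\psi^*_{\sigma_i}(y)=1$ as well. The main point requiring care is the asymmetry between $S$ in the lower bound and $S^{+}$ (together with an extra factor of at most $n$) in the upper bound: restricting to \emph{monotone} CNFs is what lets a cover of the single weight-$t$ slice suffice to realize all of $\THR_t$, while the greedy cover inherently incurs a logarithmic overhead; whether either loss can be avoided is the subtle question here.
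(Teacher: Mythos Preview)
Your proof is correct and follows essentially the same approach as the paper: the lower bound is the identical pigeonhole argument, and for the upper bound both proofs take random permutations of an optimal monotone $t$-admissible $k$-CNF, use the transitivity of the symmetric group to get that each fixed weight-$t$ assignment is covered with probability $S^{+}(n,t,k)/\binom{n}{t}$, and then invoke monotonicity to lift the weight-$t$ cover to all of $\THR_t$. The only difference is that the paper finishes with the probabilistic method directly (sample $\lceil n\binom{n}{t}/S^{+}(n,t,k)\rceil$ permutations independently and apply a union bound over the $\binom{n}{t}<e^{n}$ weight-$t$ assignments), whereas you use the equivalent greedy set-cover derandomization; this is a routine variation and yields the same bound.
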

\begin{proof}
	The first inequality is immediate, as there are $\binom{n}{t}$ weight-$t$ assignments to be covered, while any admissible $k$-CNF formula can accept at most $S(n,t,k)$ weight-$t$ assignments.

	To show the second inequality, we use the probabilistic method: Let $\psi$ be a monotone $t$-admissible $k$-CNF formula on $n$ variables maximizing the number of satisfying assignments of weight $t$. Let $D$ be the distribution of $k$-CNF formulas obtained from $\psi$ by taking a uniformly chosen random permutation of the variables. If $\psi'$ is sampled from $D$, the probability that a fixed assignment of weight $t$ satisfies $\psi'$ is   
$S^{+}(n,t,k)/\binom{n}{t}$.
	
	Let $\phi$ be the disjunction of $r:= \lceil (\binom{n}{t} / S^{+}(n,t,k)) n \rceil$  random $k$-CNF formulas chosen independently according to $D$. Then the probability that any fixed assignment of weight $t$ fails to satisfy any of the $r$ random $k$-CNFs is at most $(1-S^{+}(n,t,k)/\binom{n}{t})^r \le e^{-r \cdot S^{+}(n,t,k)/\binom{n}{t}} \le e^{-n}$. Thus, a union bound over all $\binom{n}{t}\le 2^n < e^n$ assignments shows that with positive probability, $\phi$ accepts all assignments of weight $t$. Observe that the monotonicity and $t$-admissibility of $\psi$ imply that $\phi$ is equivalent to $\THR_t$, establishing the claim. 
\end{proof}

If \cref{conjecture} is true, we obtain
\begin{align}\label{eq:F-lowerbound}
 F(n,t,k) \ge 2^{\frac{n}{k-1} (1/2-o(1)) (1-\alpha) \log(2 \pi (k-1) \alpha)},
\end{align}
which is proved in \cref{sec:intermediate}. The following consequence is immediate.

\begin{observation}\label{obs:F-lowerbound}
	If \cref{conjecture} holds, then for infinitely many $n$,  $\THR_{\alpha n}$ requires $\Sigma \Pi \Sigma_k$ circuits of size $2^{\Omega(\frac{n}{k} \log k)}$.
\end{observation}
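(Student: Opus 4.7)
The plan is to derive \cref{obs:F-lowerbound} directly from the lower bound \eqref{eq:F-lowerbound}, which \cref{sec:intermediate} establishes under \cref{conjecture}. Since the conjecture requires $b := (k-1)/(1-\alpha)$ to be an integer dividing $n$, I first choose a constant $\alpha$ so that this restriction is satisfied for infinitely many $n$; any fixed rational $\alpha \in (0,1)$ for which $(k-1)/(1-\alpha)$ is a positive integer will do.

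A natural choice is $\alpha = 1/2$ (valid for every $k \ge 2$), which yields $b = 2(k-1)$, so the bound applies whenever $2(k-1) \mid n$. Substituting into \eqref{eq:F-lowerbound},
\[
	F(n, n/2, k) \;\ge\; 2^{\frac{n}{k-1} \cdot (1/2 - o(1)) \cdot \frac{1}{2} \cdot \log(\pi(k-1))}.
\]
Since $\pi(k-1) > k$ for all $k \ge 2$, we have $\log(\pi(k-1)) \ge \log k$, and combined with $1/(k-1) \ge 1/k$ the exponent is $\Omega((n/k)\log k)$. The set of positive integer multiples of $2(k-1)$ is infinite, so the conclusion holds for infinitely many $n$, as required.

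The main non-obstacle: \cref{obs:F-lowerbound} is essentially a direct substitution; all of the real content lives in deriving \eqref{eq:F-lowerbound}. In that derivation, the leading $nH(\alpha)$ contributions of $\binom{n}{\alpha n}$ (numerator) and $\binom{b}{\alpha b}^{n/b}$ (denominator) in the ratio $\binom{n}{t}/S(n,t,k)$ from \cref{obs:kCNFcovering} cancel exactly, and the nontrivial exponent is extracted entirely from the $\log\sqrt{2\pi n \alpha(1-\alpha)}$-style Stirling correction terms. Carefully tracking those corrections is the only subtle aspect of the whole argument; given \eqref{eq:F-lowerbound}, the observation itself requires no further work beyond fixing the constant $\alpha$ and noting that infinitely many admissible $n$ exist.
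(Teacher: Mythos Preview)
Your proposal is correct and follows the same route as the paper: invoke \eqref{eq:F-lowerbound} and read off that the exponent is $\Omega\!\left(\tfrac{n}{k}\log k\right)$. Two small remarks. First, the paper treats $\alpha$ as a fixed constant throughout, so there is no need to specialize to $\alpha=1/2$; for any constant $\alpha\in(0,1)$ with $b=(k-1)/(1-\alpha)$ an integer, the factor $(1-\alpha)\log(2\pi(k-1)\alpha)$ in \eqref{eq:F-lowerbound} is $\Theta(\log k)$, which already gives the claim (your $\alpha=1/2$ computation is a perfectly valid concrete instance). Second, you never explicitly pass from $F(n,t,k)$ to $\Sigma\Pi\Sigma_k$ circuit size; the paper's one-line proof is precisely this step, observing that any $\Sigma\Pi\Sigma_k$ circuit for $\THR_t$ is a disjunction of $k$-CNFs, so $F(n,t,k)$ lower-bounds the number of middle-layer gates. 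This is immediate from the definition of $F$, but it is the actual content of the observation and should be stated.
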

\begin{proof}
	Since any $\Sigma \Pi \Sigma_k$ circuit can be viewed as a disjunction of $k$-CNF formulas, \eqref{eq:F-lowerbound} gives a lower bound on the number of gates in the middle layer of any $\Sigma \Pi \Sigma_k$ circuit.  
\end{proof}

As the best known  $\Sigma \Pi\Sigma_k$-circuit lower bounds are of the form $2^{\Omega(\frac{n}{k})}$, 
the lower bound in \cref{obs:F-lowerbound} would improve the asymptotic dependence in $k$, resolving an open question (Problem 3) of \cite{HastadJuknaPudlak_1995_cc}. In fact, we only need to show optimality for any threshold $t= \frac{n}{o(k)}$ to improve the asymptotic dependence.

For depth-3 circuits without bottom fan-in restriction, the best known lower bounds for Majority (i.e., $\THR_{n/2}$) are 
of the form $2^{\Omega(\sqrt{n})}$ \cite{KlawePaulPippengerYannakakis_1984_stoc, HastadJuknaPudlak_1995_cc, PaturiPudlakSaksZane_2005_jacm}. \cref{conjecture} would improve this to  $2^{\Omega(\sqrt{n \log n})}$, matching a construction by~\cite{Boppana_1984_stoc}. \footnote{Note that \cite{Wolfovitz_2006_ipl} incorrectly claims constructing depth 3 circuit for majority of size $2^{O(\sqrt{n})}$, which is due to the oversight explained in \cref{footnote:Wolfovitz}.} 

Finally, for fixed $k\ge 3$, the current best lower bounds for an explicit function are given by \cite{PaturiPudlakSaksZane_2005_jacm}, and are of the form $2^{\frac{\mu_k}{k-1} n -o(n)}$ where $\mu_k = \sum_{j \ge 1} \frac{1}{j(j+1/(k-1))}$ with  $\lim_{k\to \infty} = \frac{\pi^2}{6} \approx 1.645$. \cref{conjecture} would give stronger lower bounds for every $k\ge 12$.

\paragraph*{Related and Subsequent Works}

We mention that this work was completed in 2021. Subsequently, several works have studied combinatorial properties of $k$-CNFs computing threshold functions \cite{FGT22VC, Amano23Majority, GPPST24majority} as well as complexity of specialized circuits computing majority \cite{LRT22Majority}. We mention that all of these works except \cite{GPPST24majority} were done independent of this work. Also, \cite{Amano23Majority} showed \cref{conjecture} is false for $k=4$ by showing a block construction using blocks of size $2k$ gives a better construction. We believe the conjecture should hold for odd $k$ and a variant of it should hold for even $k$ so that \cref{obs:F-lowerbound} holds.

\paragraph*{Open Problems}

It remains open to prove or disprove the optimality of the 
adaptive block construction for $k\ge 3$ (\cref{conjecture}). 

Note that the adaptive block construction is natural: It decomposes the variables into blocks, where the block size is chosen as the largest number $b$ such that we can express $\THR_{\alpha b}$ perfectly as a $k$-CNF (to see this claim, note that \cref{thm:turan1} proves $S(b,b-k,k) < \binom{b}{b-k}$). However, it is unclear whether a larger block size might enable us to capture (while not \emph{all}) a relatively large fraction of weight-$\alpha b$ solutions in each block.
This suggests the following question: Can we show that there are optimal formulas with a block construction, i.e., a construction that 
partitions the variables into blocks $B_1, \dots, B_b$ with $b\ge 2$ and 
designs a $k$-CNF $\Psi_i$ for each block $B_i$ such that any 
satisfying assignment for $\Psi_i$ has Hamming weight at least $t_i$, where $t_1, \dots, t_b$ are such that $\sum_{i=1}^b t_i = t$?

It is also interesting to resolve whether non-monotone clauses are beneficial for capturing threshold functions.
While we are able to show that $S^{+}(n,t,k) = S(n,t,k)$ for $k=2$,
the question is open for $k\geq 3$.

\section{Preliminaries} \label{sec:preliminaries}

We set $[n] = \{1, \dots, n\}$. A CNF formula $F$ over the variables $x_1, \dots, x_n$ is a conjunction of \emph{clauses} $(X_{i_1} \lor \dots \lor X_{i_k})$, where each \emph{literal} $X_{i_j}$ is either a variable~$x_{i_j}$ or its negation~$\neg x_{i_j}$. Here, $k$ is called the \emph{width} of the clause. $F$ is said to be a $k$-CNF formula if all of its clauses have width at most $k$. Moreover, $F$ is called \emph{monotone} if all of its clauses only contain positive literals. We will often identify a CNF formula $F$ with the Boolean function $F : \{0, 1\}^n \to \{0, 1\}$ it computes. Throughout, we use lower-case letters $x, y, z$ to denote variables and upper-case letters $X, Y, Z$ to denote their corresponding literals.

For an \emph{assignment} $\alpha \in \{0, 1\}^n$, let $\wt(\alpha) = \sum_i \alpha_i$ denote the Hamming weight of~$\alpha$. Let $F : \{0, 1\}^n \to \{0, 1\}$ be a Boolean function. We set $\sat(F) = \{ \alpha : F(\alpha) = 1 \}$ and $\unsat(F) = \{0, 1\}^n \setminus \sat(F)$. We let $\sat_t(F)$ denote the subset of satisfying assignments with Hamming weight equal to~$t$. We also  say that a formula $F$ is $t$-\emph{admissible} if  $F$ does not satisfy any assignment with weight less than $t$. By definition, every formula is $0$-admissible.

We finally define the quantities which are the main focus of this paper. For~$k, t \leq n$, let:
\begin{itemize}
\item $S(n, t, k) = \max\{\, |\sat_t(F)| : \text{\normalfont$F$ is a $t$-admissible $k$-CNF over $n$ variables} \,\}$,
\item $S^+(n, t, k) = \max\{\, |\sat_t(F)| : \text{\normalfont$F$ is a $t$-admissible \emph{monotone} $k$-CNF over $n$ variables} \,\}$,
\item $F(n,t,k) = \min\{q : \THR_t = \bigvee_{i=1}^q \psi_i\}$ where $\psi_i$ are $k$-CNF formulas. 
\end{itemize}

\begin{lemma}[{{{\cite[Chapter 10,~Lemma 7]{MacWilliamsSloane_1977}}}}] \label{lem:binom-bounds}
For $\alpha \in (0, 1)$, let $H(\alpha) = -\alpha \log_2(\alpha) - (1-\alpha) \log_2(1-\alpha)$ denote the binary entropy function. Then, for all $n \geq 0$:
\begin{align*}
	\frac{2^{nH(\alpha)}}{\sqrt{8n\alpha(1-\alpha)}} \le \binom{n}{\alpha n} \le \frac{2^{nH(\alpha)}}{\sqrt{2\pi n\alpha(1-\alpha)}}.
\end{align*}
\end{lemma}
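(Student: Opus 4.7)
The plan is to derive both inequalities from Robbins' refinement of Stirling's formula: for every integer $m\ge 1$,
\[ m! = \sqrt{2\pi m}\,(m/e)^m\, e^{r_m}, \qquad \tfrac{1}{12m+1} < r_m < \tfrac{1}{12m}. \]
Writing $\binom{n}{\alpha n} = n!/((\alpha n)!\,((1-\alpha)n)!)$ and substituting this into each factorial, I would collect the result into three pieces: an exponential piece $(n/e)^n / [(\alpha n/e)^{\alpha n}((1-\alpha)n/e)^{(1-\alpha)n}]$, a polynomial piece $\sqrt{2\pi n}/[\sqrt{2\pi\alpha n}\cdot\sqrt{2\pi(1-\alpha)n}]$, and an error piece $R_n := e^{r_n - r_{\alpha n} - r_{(1-\alpha)n}}$. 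The exponential piece collapses to $\alpha^{-\alpha n}(1-\alpha)^{-(1-\alpha)n} = 2^{nH(\alpha)}$ straight from the definition of binary entropy, while the polynomial piece collapses to $1/\sqrt{2\pi n\alpha(1-\alpha)}$. Both inequalities thereby reduce to controlling $R_n$.

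For the upper bound I need $R_n \le 1$. Using $r_n < 1/(12n)$ and $r_m > 1/(12m+1)$, this reduces to the numerical inequality
\[ \frac{1}{12n} \le \frac{1}{12\alpha n + 1} + \frac{1}{12(1-\alpha)n + 1}, \]
which follows at once from the elementary bound $1/a + 1/b \ge 4/(a+b)$ (an immediate consequence of AM--GM) applied to $a = 12\alpha n + 1$ and $b = 12(1-\alpha)n + 1$: the right-hand side is at least $4/(12n+2) > 1/(12n)$, so this direction is routine.

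The step I expect to be the main obstacle is the lower bound, where I must certify $R_n \ge \sqrt{2\pi/8} = \sqrt{\pi}/2 \approx 0.886$. The straightforward estimate $R_n \ge e^{-1/(12\alpha n) - 1/(12(1-\alpha)n)} = e^{-1/(12 n\alpha(1-\alpha))}$ works comfortably whenever $n\alpha(1-\alpha)$ is not too small, but becomes delicate in the extreme regimes $\alpha n = 1$ or $\alpha n = n-1$, where the denominator can swell to $\Theta(1)$. In these boundary cases one must verify the bound directly: for $\alpha n = 1$ one compares $\binom{n}{1} = n$ against $2^{nH(1/n)}/\sqrt{8(n-1)}$, and symmetrically for $\alpha n = n-1$. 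This case analysis is precisely where the looser constant $8$ in the lower-bound denominator (as opposed to $2\pi$ in the upper bound) absorbs the error, making the bound valid uniformly for all integer $\alpha n \in \{1,\dots,n-1\}$.
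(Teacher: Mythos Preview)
The paper does not prove this lemma at all; it is quoted verbatim as a known estimate from MacWilliams--Sloane and used as a black box, so there is no ``paper's own proof'' to compare against. Your Robbins--Stirling approach is the standard derivation (and essentially the one in the cited reference): the decomposition into the entropy factor $2^{nH(\alpha)}$, the polynomial factor $1/\sqrt{2\pi n\alpha(1-\alpha)}$, and the error $R_n$ is correct, your AM--HM argument for $R_n\le 1$ is clean, and you have correctly identified that the looser constant $8$ in the lower bound is there precisely to swallow the boundary cases $\alpha n\in\{1,n-1\}$, which must be checked by hand. One small caveat: the lemma as stated says ``for all $n\ge 0$,'' but of course both sides are only meaningful when $\alpha n$ is a positive integer strictly less than $n$; your proof implicitly (and correctly) works in that regime.
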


\section{Bounds on \texorpdfstring{\boldmath$S(n, t, k)$}{S(n, t, k)}} \label{sec:bounds}
In this section we propose $k$-CNF constructions for optimally 
capturing threshold functions.
Our constructions are block constructions: we group variables   
into  blocks and design a $k$-CNF on each block of variables to accept assignments with certain minimum Hamming weight.
We focus on the regime where $k$ is constant and $t$ may depend on $n$.
For small thresholds ($t\leq n/k$), we show that our constructions  are optimal and unique.
For intermediate thresholds ($t=\alpha n$ for $0<\alpha<1$), we show we can do better by varying block sizes as well as thresholds of the blocks using our adaptive block construction.
In \cref{sec:k2}, we will show that for $k=2$, our adaptive block construction is optimal for all thresholds $t$.

We will first explain how the block construction works
by showing that the blocks  can be  put together to create the final $k$-CNF. Subsequently, we design specific $k$-CNF for the blocks.


\begin{lemma}[Putting the block together] \label{lem:blocks}
Let $k, t_1, \dots, t_a, n_1, \dots, n_a$ be positive integers. We then have 
\begin{align*}
    S\left(\sum_{i=1}^a n_i, \sum_{i=1}^a t_i, k\right) \geq \prod_{i=1}^a S(n_i, t_i, k).
\end{align*}
\end{lemma}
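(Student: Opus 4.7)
The plan is a direct product construction. For each $i\in[a]$, let $F_i$ be an optimal $t_i$-admissible $k$-CNF on a variable set $V_i$ of size $n_i$ achieving $|\sat_{t_i}(F_i)| = S(n_i, t_i, k)$. Take the $V_i$ pairwise disjoint, and define $F := \bigwedge_{i=1}^a F_i$ on the $\sum_i n_i$ variables $V := \bigcup_i V_i$. Since each $F_i$ is a $k$-CNF, so is $F$. It then suffices to verify (i) that $F$ is $(\sum_i t_i)$-admissible, and (ii) that $F$ has at least $\prod_i S(n_i, t_i, k)$ satisfying assignments of weight exactly $\sum_i t_i$.

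For (i), consider any $\alpha \in \sat(F)$, and let $\alpha^{(i)}$ denote its restriction to $V_i$. Then $\alpha^{(i)} \in \sat(F_i)$, so by $t_i$-admissibility of $F_i$ we have $\wt(\alpha^{(i)}) \ge t_i$. Summing over $i$ yields $\wt(\alpha) = \sum_i \wt(\alpha^{(i)}) \ge \sum_i t_i$, as required.

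For (ii), observe that the same computation shows the stronger fact that if $\wt(\alpha) = \sum_i t_i$, then each inequality $\wt(\alpha^{(i)}) \ge t_i$ must in fact be an equality, i.e., $\alpha^{(i)} \in \sat_{t_i}(F_i)$. Conversely, given any tuple $(\beta^{(1)}, \dots, \beta^{(a)})$ with $\beta^{(i)} \in \sat_{t_i}(F_i)$, concatenating them across the disjoint blocks yields an assignment on $V$ of weight $\sum_i t_i$ that satisfies every $F_i$ and hence $F$. This bijection gives
\[ |\sat_{\sum_i t_i}(F)| \;=\; \prod_{i=1}^a |\sat_{t_i}(F_i)| \;=\; \prod_{i=1}^a S(n_i, t_i, k), \]
and the lemma follows from the definition of $S(\sum_i n_i, \sum_i t_i, k)$ as a maximum. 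There is no real obstacle here; the only point that requires a moment's thought is the observation that $t_i$-admissibility of each block forces the weight distribution across blocks to be pinned down exactly when the total weight equals $\sum_i t_i$, which is precisely why the product counting is tight rather than just a lower bound.
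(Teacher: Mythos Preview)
Your proof is correct and follows essentially the same direct product construction as the paper: take optimal formulas on disjoint variable sets, conjoin them, and count. Your argument is in fact more explicit than the paper's, which simply asserts that $\sat(F) = \sat(G_1)\times\cdots\times\sat(G_a)$ and $|\sat_t(F)| = \prod_i S(n_i,t_i,k)$ without spelling out the weight-pinning step you highlight.
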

\begin{proof}
Suppose that $G_i$ is a $t_i$-admissible $k$-CNF formula and $|\sat_{t_i}(G)| = S(n_i, t_i, k)$. Let $F = \bigwedge_{i=1}^a G_i$ where the subformulas $G_i$ have distinct input variables. $F$ has $\sum_i n_i$ input variables in total and it is easy to check that $\sat(F) = \sat(G_1) \times \dots \times \sat(G_a)$. Moreover, $F$ is $t$-admissible where $t := \sum_i t_i$ and $|\sat_t(F)| = \prod_i |\sat_{t_i}(G_i)| = \prod_i S(n_i, t_i, k)$.
\end{proof}

For most applications it suffices to apply \cref{lem:blocks} in the following simpler form, where all $t_i$ are the same and all $n_i$ are the same:
\begin{equation*}
    S(an, at, k) \geq S(n, t, k)^a.
\end{equation*}

\subsection{Bounds for Small Thresholds}
We first focus on the setting $t \leq n/k$. In this regime it is easy to prove optimal bounds for $S(n, t, k)$; these bounds first proved  in~\cite{Wolfovitz_2006_ipl}. We provide a stronger version of the same theorem where we show the uniqueness (up to isomorphism) of the optimal solution for $t\leq n/k$. 

\begin{theorem}[Small Thresholds] \label{thm:small-thresholds}
For $n \geq k \cdot t$, $S(n, t, k) = k^t$. Moreover, a $k$-CNF formula attains this bound only if it has a subset of $t$ disjoint clauses of the form $(x_1 \lor \dots \lor x_k)$.
\end{theorem}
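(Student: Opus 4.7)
The lower bound $S(n,t,k) \ge k^t$ is immediate from the construction described in the introduction: form $t$ pairwise disjoint clauses of the form $(x_1 \vee \cdots \vee x_k)$ and add unit clauses $\neg x$ for the remaining $n-tk$ variables; this is $t$-admissible and has exactly $k^t$ weight-$t$ satisfying assignments. For the upper bound together with uniqueness, I would proceed by induction on $t$, with base case $t=0$ trivial (any formula has at most one weight-$0$ solution, and the claimed structure requires $0$ disjoint clauses).

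\textbf{Inductive step.} Let $F$ be a $t$-admissible $k$-CNF on $n \ge kt$ variables, with $t\ge 1$. Since the all-zero assignment falsifies $F$, at least one clause of $F$ must contain only positive literals, say $C=(x_{i_1} \vee \cdots \vee x_{i_w})$ with $w \le k$. I partition $\sat_t(F) = A_1 \sqcup \cdots \sqcup A_w$, where $A_j$ is the set of weight-$t$ satisfying assignments with $x_{i_j}=1$ and $x_{i_1}=\cdots=x_{i_{j-1}}=0$. Each $A_j$ is in bijection with $\sat_{t-1}$ of the restricted $k$-CNF $F|_{x_{i_1}=0,\ldots,x_{i_{j-1}}=0,\,x_{i_j}=1}$ on $n-j$ variables. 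A quick check shows this restriction remains $(t-1)$-admissible (otherwise we would lift a sub-$t-1$-weight satisfying assignment back to a sub-$t$-weight solution of $F$). Since $n-j \ge n-k \ge k(t-1)$, the inductive hypothesis yields $|A_j| \le k^{t-1}$, hence $|\sat_t(F)| \le w \cdot k^{t-1} \le k^t$.

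\textbf{Uniqueness.} If $|\sat_t(F)|=k^t$, then we must have $w=k$ (so $C$ has width exactly $k$) and $|A_j|=k^{t-1}$ for every $j$; in particular the restricted formula $F':=F|_{x_{i_1}=\cdots=x_{i_{k-1}}=0,\,x_{i_k}=1}$ attains the inductive bound on $n-k$ variables, so by induction $F'$ contains $t-1$ pairwise variable-disjoint width-$k$ monotone clauses. Each such clause of $F'$ is the restriction of some clause of $F$; because the clause of $F'$ already has width $k$ and every clause of $F$ has width at most $k$, no literal can have been removed by the restriction, so the original clause in $F$ is literally the same width-$k$ monotone clause on variables disjoint from $\{x_{i_1},\ldots,x_{i_k}\}$. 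Adjoining $C$ gives the required $t$ pairwise disjoint width-$k$ monotone clauses in $F$.

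\textbf{Anticipated obstacle.} The mild subtlety is insisting that $C$ be monotone: a non-monotone clause would contribute parts in which a variable is pinned to $0$ rather than $1$, and the induction would no longer decrement $t$. The existence of a monotone clause is exactly what $t$-admissibility (for $t\ge 1$) guarantees via the all-zero assignment, which is the pivotal observation driving the whole argument; the remaining care is just the lifting step in the uniqueness part, where the width constraint forces the restricted clauses to come from untouched width-$k$ monotone clauses of $F$.
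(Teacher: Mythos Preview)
Your proof is correct and follows the same overall induction on $t$ as the paper, pivoting on a monotone clause guaranteed by the all-zero assignment. The execution differs in two minor but genuine ways. For the bound, the paper \emph{deletes} the monotone clause $C$ to obtain a subformula $F''$ on the same $n$ variables, shows $F''$ is $(t-1)$-admissible, and bounds $|\sat_t(F)| \le k\cdot|\sat_{t-1}(F'')|$; your restriction-based partition into the $A_j$'s is arguably cleaner, since it yields a disjoint decomposition rather than an overcount. For uniqueness the arguments genuinely diverge: the paper applies the inductive hypothesis to $F''$ to obtain $t-1$ disjoint monotone $k$-clauses and then proves $C$ is disjoint from each of them by exhibiting, in case of overlap, a weight-$(t-1)$ satisfying assignment of $F$; you instead apply the inductive hypothesis to the restricted formula and lift each width-$k$ clause back to $F$ using the observation that the width bound forbids any literal from having been dropped by the restriction, so disjointness from $C$ is automatic. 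Both routes work; the paper's avoids reasoning about how clauses transform under restriction, while yours sidesteps the separate disjointness check.
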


\begin{proof}
It is easy to see that $S(n, t, k) \geq k^t$ by analyzing the claimed unique construction: Let $M(x_1, \dots, x_k) = (x_1 \lor \dots \lor x_k)$ be the $k$-CNF formula accepting all inputs with Hamming weight at least one, and observe that $M$ certifies the bound $S(k, 1, k) \geq k$. By \cref{lem:blocks} it follows that $S(n, t, k) \geq S(kt, t, k) \geq S(k, 1, k)^t = k^t$.

Next we prove that $S(n, t, k) \leq k^t$ by induction on $t$. The base case $t=0$ is trivial: There is only one assignment with Hamming weight $0$. For the inductive step assume that $t \geq 1$ and let~$F$ be a $t$-admissible optimal formula and $|\sat_t(F)| = S(n, t, k)$. Since the all-zero assignment does not satisfy $F$,  $F$ must contain a monotone clause, say $F = F' \land (x_1 \lor \dots \lor x_k)$. Observe that $F'$ is a $(t-1)$-admissible $k$-CNF formula. Indeed, whenever an assignment satisfies  $F'$ it suffices to switch on at most one additional variable amongst $x_1, \dots, x_k$ to satisfy $F$. 
Using distributive law on the last clause, we see that $F = \bigvee_{i=1}^k(F'\land x_i)$ and so, 
$$|\sat_t(F)| \le \sum_{i=1}^k|\sat_t(F'\land x_i)| \le \sum_{i=1}^k|\sat_{t-1}(F')| = k \cdot |\sat_{t-1}(F')|$$
It follows by the induction hypothesis that $|\sat_t(F)| \leq k \cdot |\sat_{t-1}(F')| \leq k^t$.

To prove uniqueness of the optimal construction, suppose that $|\sat_t(F)| = k^t$. Then the inequality $|\sat_t(F)| \leq k \cdot |\sat_{t-1}(F')| \leq k^t$ must be tight and we must have $|\sat_{t-1}(F')| = k^{t-1}$. Therefore, by the induction hypothesis~$F'$ contains $t-1$ disjoint monotone $k$-clauses $C_1, \dots, C_{t-1}$ and the weight $t-1$ satisfying assignments of $F'$ are exactly the assignments which switch on one variable from each clause $C_i$. Let $C_t := (x_1 \lor \dots \lor x_k)$; we claim that $C_t$ is disjoint from $C_1, \dots, C_{t-1}$. Indeed, if $C_t$ shares a variable, say $x_1$, with another clause $C_i$, then we can construct a satisfying assignment of $F$ with Hamming weight less than $t$ by switching on $x_1$ and one variable out of each clause $C_j, j \neq i$. 
\end{proof}

\subsection{Bounds for Intermediate Thresholds}
\label{sec:intermediate}
In this section we present our candidate construction for intermediate thresholds $t = \alpha n$ where $\alpha$ is a constant. We start with the following lemma, stating that for $t = n-k+1$ it is possible find a $t$-admissible formula that is satisfied by \emph{all} weight-$t$ assignments. In fact, $t = n-k+1$ is the smallest threshold with this property and any threshold below $n-k+1$ suffers from the difficulties of the Tur\'an problem (see \cref{sec:turan}).

\begin{lemma} \label{lem:n-k+1}
$S(n, n-k+1, k) = \binom{n}{k-1}$.
\end{lemma}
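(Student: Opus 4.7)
The plan is to prove the identity $S(n,n-k+1,k) = \binom{n}{k-1}$ by squeezing from both sides. For the upper bound, I would simply observe that any $k$-CNF can accept at most $\binom{n}{n-k+1} = \binom{n}{k-1}$ weight-$(n-k+1)$ assignments, since that is the total number of such assignments in $\{0,1\}^n$. So the substance lies entirely in the lower bound: I need to exhibit an $(n-k+1)$-admissible $k$-CNF that accepts \emph{every} assignment of weight $n-k+1$.

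The natural candidate is the ``full'' monotone $k$-CNF
\[ F \;=\; \bigwedge_{S \in \binom{[n]}{k}} \Bigl( \bigvee_{i \in S} x_i \Bigr), \]
which contains one monotone clause for every $k$-subset of the variables. I would then verify two properties of $F$. First, $(n-k+1)$-admissibility: if $\alpha$ has weight $w < n-k+1$, then it has at least $n - (n-k) = k$ zero coordinates, and the monotone clause indexed by any $k$-subset of these zero coordinates is falsified, so $\alpha \notin \sat(F)$. Second, completeness on weight $n-k+1$: if $\alpha$ has exactly $n-k+1$ ones, then it has only $k-1$ zeros, and every $k$-subset of $[n]$ therefore contains at least one coordinate where $\alpha$ is $1$, so every clause of $F$ is satisfied. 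Together, these give $|\sat_{n-k+1}(F)| = \binom{n}{k-1}$, matching the upper bound.

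I do not anticipate a real obstacle here: the only mildly subtle point is the correct counting of zero coordinates in the admissibility check (noting that ``weight less than $n-k+1$'' translates to ``at least $k$ zeros''), which is exactly what makes the threshold $t = n-k+1$ the smallest one at which the all-clauses construction still excludes every lower-weight assignment. This is also the content of the remark in the excerpt that $n-k+1$ is the smallest threshold for which a $t$-admissible $k$-CNF can satisfy \emph{all} weight-$t$ assignments; for $t = n-k$, the analogous question is exactly the Tur\'an problem from \cref{thm:turan1}, which explains why no such trivial extension is possible one step lower.
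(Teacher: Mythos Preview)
Your proposal is correct and matches the paper's own proof essentially verbatim: the paper also takes the ``full'' monotone $k$-CNF $F=\bigwedge_{I\in\binom{[n]}{k}}\bigvee_{i\in I}x_i$, verifies $(n-k+1)$-admissibility via the pigeonhole observation that weight $\le n-k$ leaves at least $k$ zero coordinates, and notes that every weight-$(n-k+1)$ assignment satisfies all clauses. No differences of substance.
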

\begin{proof}
The upper bound is clear since there are only $\binom{n}{k-1}$ assignments with Hamming weight $n-k+1$. To show the lower bound, consider the $k$-CNF formula $F(x_1, \dots, x_n)$ containing a clause $C_I = \bigvee_{i \in I} x_i$ for all size-$k$ subsets $I \subseteq [n]$. On the one hand, we claim that $F$ is $(n-k+1)$-admissible. Indeed, for any assignment $\alpha$ with weight $\wt(\alpha) \leq n-k$ we can find a clause $C_I$ which is falsified by $\alpha$ by picking $I \subseteq \{ i : \alpha_i = 0\}$. On the other hand it is easy to see that every assignment with weight $n-k+1$ is satisfying for $F$ and thus $|\sat_{n-k+1}(F)| = \binom{n}{n-k+1} = \binom{n}{k-1}$.
\end{proof}

\begin{theorem}[Intermediate Thresholds] \label{thm:intermediate-thresholds}
Let $t = \alpha n$. Whenever $b = \frac{k-1}{1-\alpha}$ is an integer dividing~$n$ it holds that
\begin{equation*}
    S(n, t, k) \geq \binom{b}{\alpha b}^{n/b}.
\end{equation*}
\end{theorem}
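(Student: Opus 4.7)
The plan is to apply the block construction from \cref{lem:blocks} by partitioning the $n$ variables into $n/b$ blocks of size $b$, and designing a small $k$-CNF for each block that captures \emph{all} weight-$\alpha b$ assignments of that block. The key observation is that the choice $b = \frac{k-1}{1-\alpha}$ is exactly what makes the per-block threshold equal to the largest value for which a perfect capture is possible.

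First I would verify the arithmetic identity $b - k + 1 = \alpha b$, which follows directly from $b(1-\alpha) = k-1$. This means that the threshold $\alpha b$ on a block of size $b$ coincides with the special threshold $b - k + 1$ treated by \cref{lem:n-k+1}. Hence \cref{lem:n-k+1} gives
\begin{equation*}
S(b, \alpha b, k) \;=\; S(b, b-k+1, k) \;=\; \binom{b}{k-1} \;=\; \binom{b}{b-k+1} \;=\; \binom{b}{\alpha b},
\end{equation*}
where the last equality uses $b-k+1 = \alpha b$ again. In particular, the monotone $k$-CNF consisting of all $\binom{b}{k}$ positive $k$-clauses on the block is $\alpha b$-admissible and satisfied by every weight-$\alpha b$ assignment of the block.

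Next I would assemble $n/b$ independent copies of this per-block formula on disjoint sets of $b$ variables and invoke \cref{lem:blocks} in its simpler form $S(an, at, k) \geq S(n, t, k)^a$, with $a = n/b$, block size $b$, and block threshold $\alpha b$ (so the total weight is $(n/b)\cdot \alpha b = \alpha n$, as required). This yields
\begin{equation*}
S(n, \alpha n, k) \;\geq\; S(b, \alpha b, k)^{n/b} \;=\; \binom{b}{\alpha b}^{n/b},
\end{equation*}
which is exactly the claimed bound.

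There is no real obstacle here: the entire proof is a one-line combination of \cref{lem:n-k+1} and \cref{lem:blocks}, and the divisibility hypothesis that $b$ is an integer dividing $n$ is precisely what is needed to make the partition into equal-size blocks work cleanly. The only subtlety worth flagging is that the equality $b - k + 1 = \alpha b$ is exactly why $b = (k-1)/(1-\alpha)$ is the \emph{largest} block size that still permits capturing every weight-$\alpha b$ assignment on a block; beyond this size one runs into the Tur\'an obstructions discussed later in the paper, which is why the conjecture about optimality of this construction is delicate even though the lower bound itself is immediate.
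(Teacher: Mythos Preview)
Your proposal is correct and follows essentially the same approach as the paper: verify the identity $\alpha b = b-k+1$, apply \cref{lem:n-k+1} to obtain $S(b,\alpha b,k) = \binom{b}{\alpha b}$, and then invoke \cref{lem:blocks} with $n/b$ equal blocks. The paper's proof is a terse two-line version of exactly this argument.
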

We remark that an analogous more convoluted statement is true for the cases where $b = \frac{k-1}{1-\alpha}$ is not an integer by rounding in the appropriate places.
\begin{proof}
Note that if $b = \frac{k-1}{1-\alpha}$ is an integer, then so is $\alpha b = b - k + 1$. By \cref{lem:blocks} we have $S(n, t, k) = S(\frac nb \cdot b, \frac nb \cdot \alpha b, k) \geq S(b, \alpha b, k)^{n/b}$ and by \cref{lem:n-k+1}, $S(b, \alpha b, k) = S(b, b - k + 1, k) = \binom{b}{k-1} = \binom{b}{\alpha b}$. In combination these bounds prove the claim.
\end{proof}

Next, we prove that if \cref{thm:intermediate-thresholds} turns out to be optimal, then we obtain improved depth-$3$ circuit lower bounds as a consequence.

\begin{proposition}[Optimality Implies Circuit Lower Bounds]
Let $t = \alpha n$. If the bound in \cref{thm:intermediate-thresholds} is tight (i.e., it also holds that $S(n, t, k) \leq \binom{b}{\alpha b}{}^{n/b}$ for $b = \frac{k-1}{1-\alpha}$), then
\begin{equation*}
    F(n, t, k) \geq 2^{\frac{n}{k-1} \cdot \frac{1-\alpha}2 \log(2 \pi \alpha (k-1)) - o(n)} \geq 2^{\Omega(\frac nk \log k)}.
\end{equation*}
\end{proposition}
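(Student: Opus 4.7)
The plan is to combine Observation~\ref{obs:kCNFcovering} with the hypothesized upper bound on $S(n,t,k)$ and then apply the entropy bounds from Lemma~\ref{lem:binom-bounds} to both the numerator and the denominator. The crucial structural feature that makes the calculation clean is that the block size $b$ is chosen so that a block-based construction on weight-$\alpha b$ assignments has exactly the same per-variable entropy as the global weight-$\alpha n$ constraint, so the leading $2^{n H(\alpha)}$ terms cancel exactly and only the polynomial prefactors survive.

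First, I would apply Observation~\ref{obs:kCNFcovering} to obtain
\[
F(n,t,k) \;\ge\; \frac{\binom{n}{\alpha n}}{S(n,t,k)} \;\ge\; \frac{\binom{n}{\alpha n}}{\binom{b}{\alpha b}^{n/b}},
\]
where the second inequality uses the hypothesis. Then I would invoke Lemma~\ref{lem:binom-bounds} to lower bound the numerator by $2^{n H(\alpha)}/\sqrt{8 n \alpha(1-\alpha)}$ and to upper bound each $\binom{b}{\alpha b}$ in the denominator by $2^{b H(\alpha)}/\sqrt{2 \pi b \alpha (1-\alpha)}$. Raising the latter to the power $n/b$ produces a factor $2^{n H(\alpha)}$ that cancels against the corresponding factor in the numerator, leaving
\[
F(n,t,k) \;\ge\; \frac{\bigl(2 \pi b \alpha (1-\alpha)\bigr)^{n/(2b)}}{\sqrt{8 n \alpha (1-\alpha)}}.
\]

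The next step is to substitute $b = (k-1)/(1-\alpha)$, which yields the identity $b \alpha (1-\alpha) = \alpha(k-1)$ and $1/b = (1-\alpha)/(k-1)$. Taking logarithms then gives
\[
\log_2 F(n,t,k) \;\ge\; \frac{n}{k-1} \cdot \frac{1-\alpha}{2} \log_2\bigl(2 \pi \alpha (k-1)\bigr) \;-\; O(\log n),
\]
which is the first asserted inequality (the $O(\log n)$ slack is absorbed by the $o(n)$ term in the exponent). For the final $2^{\Omega(\tfrac{n}{k}\log k)}$ bound, I would observe that for any constant $\alpha$ bounded away from $0$ and $1$ (treating $\alpha$ as fixed), the factor $\tfrac{1-\alpha}{2(k-1)}$ is $\Theta(1/k)$ and $\log_2(2 \pi \alpha(k-1)) = \Theta(\log k)$, so the exponent is $\Omega(\tfrac{n}{k}\log k)$, completing the proof.

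The argument is essentially computational; there is no real obstacle once the cancellation of the entropy terms is spotted. The only thing to watch out for is making sure the parameters are consistent ($b$ an integer dividing $n$, as in the hypothesis) so that no rounding loss enters, and keeping track of which direction each bound from Lemma~\ref{lem:binom-bounds} is applied so that both estimates work in favor of the lower bound on $F(n,t,k)$.
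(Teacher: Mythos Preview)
Your proposal is correct and follows exactly the paper's approach: apply Observation~\ref{obs:kCNFcovering}, plug in the hypothesized upper bound $S(n,t,k)\le\binom{b}{\alpha b}^{n/b}$, use the two directions of Lemma~\ref{lem:binom-bounds} so that the $2^{nH(\alpha)}$ factors cancel, and then substitute $b\alpha(1-\alpha)=\alpha(k-1)$ and $1/b=(1-\alpha)/(k-1)$. If anything, your write-up is slightly more detailed than the paper's, since you also spell out the final $\Omega(\tfrac{n}{k}\log k)$ step.
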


\begin{proof}
Assume that the bound in \cref{thm:intermediate-thresholds} is tight. Then, by \cref{obs:kCNFcovering} and using the approximation in \cref{lem:binom-bounds} we obtain
\begin{multline*}
    F(n, t, k)
    \geq \frac{\binom nt}{S(n, t, k)}
    = \frac{\binom n{\alpha n}}{\binom b{\alpha b}^{n/b}}
    \geq \frac{\sqrt{2\pi b \alpha (1-\alpha)}^{n/b} \cdot 2^{nH(\alpha)}}{\sqrt{8n \alpha (1-\alpha)} \cdot 2^{nH(\alpha)}} \\
    \geq \frac{\sqrt{2\pi \alpha(k-1)}^{n/b}}{2^{o(n)}}
    = 2^{\frac{n}{k-1} \cdot \frac{1-\alpha}2 \log(2 \pi \alpha (k-1)) - o(n)}.
\end{multline*}
\end{proof}

\section{Tight Bounds for \texorpdfstring{\boldmath$S(n, t, 2)$}{S(n, t, 2)}} \label{sec:k2}

We here show that for $k = 2$ and all $t$, our adaptive block construction is optimal.
We do this by starting with an arbitrary optimal formula and transforming it to a monotone formula whilst retaining its optimality.
For monotone formulas we show that the problem is equivalent to maximizing the number of maximal independent sets of a fixed size in a graph, a problem that was recently solved~\cite{SongYao_2020_arxiv}.
It follows from \cref{thm:small-thresholds} that for $t \leq n/2$ our constructions are unique and optimal. So in this section we show optimality for $t > n/2$. Our goal is to prove that
\begin{equation*}
	S(n, t, 2) \approx \left(\frac n{n-t}\right)^{n-t}
\end{equation*}
as detailed in the next theorem.

\begin{theorem}[Tight Bound for 2-CNFs] \label{thm:2-cnf-bound}
Write $n = (n-t)q + r$ where $q, r$ are the unique integers with $0 \leq r < n-t$. Then $S(n, t, 2) = q^{n-t-r}(q+1)^r$.
\end{theorem}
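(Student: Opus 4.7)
My plan splits into a matching lower bound, a reduction to the monotone case, and an extremal graph-theoretic upper bound. \textbf{Lower bound.} I would generalize the block construction behind \cref{thm:intermediate-thresholds} to uneven block sizes: partition the $n$ variables into $n-t$ blocks, with $r$ blocks of size $q+1$ and $n-t-r$ blocks of size $q$ (so the sizes sum to $n$). On each block of size $s$, include all $\binom{s}{2}$ monotone 2-clauses; by \cref{lem:n-k+1} this is an $(s-1)$-admissible monotone 2-CNF with exactly $s$ weight-$(s-1)$ satisfying assignments. Combining across blocks via \cref{lem:blocks} yields a $t$-admissible monotone 2-CNF with $q^{n-t-r}(q+1)^r$ weight-$t$ solutions, so $S(n,t,2)\ge S^+(n,t,2)\ge q^{n-t-r}(q+1)^r$.

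\textbf{Reduction to monotone formulas.} The key step in the upper bound is to show $S(n,t,2)=S^+(n,t,2)$, i.e., some optimal 2-CNF may be assumed monotone. Starting from an arbitrary optimal $t$-admissible 2-CNF $F$, I would transform it into a monotone 2-CNF $F^\ast$ on the same variables with $|\sat_t(F^\ast)|\ge |\sat_t(F)|$, still $t$-admissible. Unit clauses can be absorbed by substitution (a positive unit $x$ reduces to the $(n-1,t-1)$ problem and a negative unit $\bar x$ to the $(n-1,t)$ problem), so by induction on $n$ one may assume $F$ has no unit clauses. The remaining non-monotone clauses are of the form $\bar x\vee y$ (implications) or $\bar x\vee \bar y$ (mutual exclusions). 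For each such clause I would apply a local exchange argument on the two variables involved: partition $\sat_t(F)$ by the four possible values of $(x,y)$ and use $t$-admissibility to show that the clause may be replaced by a monotone clause on $\{x,y\}$ (or simply dropped) without decreasing $|\sat_t(F)|$ and without introducing a satisfying assignment of weight less than $t$.

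\textbf{Graph-theoretic step and main obstacle.} Once $F$ is monotone, identify it with the graph $G_F=([n],E)$ whose edges are the binary clauses. Satisfying assignments of $F$ correspond bijectively to vertex covers of $G_F$, so $\sat_t(F)$ corresponds to the independent sets of size $n-t$ in $G_F$, and $t$-admissibility is equivalent to $\alpha(G_F)\le n-t$. Hence
\begin{equation*}
 S^+(n,t,2) \;=\; \max_{G:\,\alpha(G)\le n-t}\ \bigl|\{\,I\subseteq V(G) : I\text{ independent},\ |I|=n-t\,\}\bigr|,
\end{equation*}
and by the recent extremal theorem of Song and Yao~\cite{SongYao_2020_arxiv} this maximum equals $q^{n-t-r}(q+1)^r$, attained (uniquely up to isomorphism) by a disjoint union of $n-t$ cliques whose sizes differ by at most one. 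The main obstacle is the monotonization step: eliminating non-monotone binary clauses without decreasing the count of weight-$t$ solutions and without breaking $t$-admissibility is where a careful case analysis is required, whereas the graph-theoretic step is essentially a citation once the reduction to monotone formulas is in hand.
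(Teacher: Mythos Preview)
Your overall architecture matches the paper's: block construction for the lower bound, reduction to monotone formulas, and then the Song--Yao bound. The lower bound is exactly the paper's argument, and your graph-theoretic step is equivalent to the paper's (the paper phrases it via \emph{maximal} independent sets of size $n-t$, but since $t$-admissibility forces $\alpha(G_F)\le n-t$, every size-$(n-t)$ independent set is already maximal, so the two formulations coincide).

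The gap is precisely where you flag it: the monotonization. Your proposed mechanism---``partition $\sat_t(F)$ by the four values of $(x,y)$ and replace the clause by a monotone clause on $\{x,y\}$ or drop it''---is too local to succeed. Consider an implication clause $(\neg x\vee y)$. Dropping it may admit a weight-$(<t)$ assignment $\alpha$ with $x=1,y=0$ satisfying the rest of $F$; you cannot repair this by flipping $x$ to $0$, because some other clause $(x\vee z)$ may then fail. Replacing by $(x\vee y)$ kills all weight-$t$ solutions with $x=y=0$ while simultaneously letting in the same dangerous $(x,y)=(1,0)$ assignments. The four-cell partition of $\sat_t(F)$ gives you no handle on either effect, since both depend on how $x$ and $y$ interact with the \emph{other} clauses. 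A purely two-variable exchange does not appear to work.

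The paper's route is genuinely non-local and proceeds in two stages on the implication graph $G(F)$. First, it makes $G(F)$ acyclic: if literals $x$ and $Y$ lie on a common directed cycle, replace every occurrence of $x$ by $Y$ and add $(x\vee\neg Y)$; this preserves $t$-admissibility and $|\sat_t|$ and strictly reduces the number of variables on cycles. Second, once $G(F)$ is acyclic, each non-monotone clause $(\neg x\vee Y)$ is removed by rerouting: delete the edge $x\to Y$ and add $(\neg Z\vee Y)$ for every predecessor $Z$ of $x$ and $(\neg x\vee W)$ for every successor $W$ of $Y$. The $t$-admissibility proof here uses a bit-flip on \emph{both} $x$ and $Y$ together with the predecessor/successor structure---not a two-variable case split. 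Termination is argued via a potential based on a fixed topological order (every new edge is strictly longer than the deleted one). Your induction on $n$ for unit clauses is fine but unnecessary; the implication-graph argument handles width-$1$ clauses uniformly.
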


It is easy to establish the lower bound. To that end, we first observe that $S(n, n-1, 2) = n$ by setting $k=2$ in \cref{lem:n-k+1}.
We then apply \cref{lem:blocks}. Note that $t = (n-t-r) (q-1) + rq$ and $n = (n-t-r)q + r(q+1)$ and thus
\begin{equation*}
	S(n, t, 2) \geq S(q, q-1, 2)^{n-t-r} S(q+1, q, 2)^r \geq q^{n-t-r} (q+1)^r.
\end{equation*}

For the rest of this section we focus on the matching upper bound, starting with a definition. Let~$F$ be a $2$-CNF formula. The \emph{implication graph $G(F)$} is a directed graph with~$2n$ vertices $\{ x_1, \dots, x_n, \neg x_1, \dots, \neg x_n \}$. For each clause $X \lor Y$, where~$X$ and~$Y$ are literals, the implication graph contains two edges: $\neg X \rightarrow Y$ and~$\neg Y \rightarrow X$. We say that $F$ is \emph{acyclic} if its implication graph $G(F)$ is acyclic.

\begin{lemma}
There exists an acyclic $2$-CNF formula $F$ that is $t$-admissible and $|\sat_t(F)| = S(n, t, 2)$.
\end{lemma}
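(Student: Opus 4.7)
The plan is to start with an arbitrary optimal $t$-admissible $2$-CNF formula $F$ (so $|\sat_t(F)|=S(n,t,2)$) and iteratively modify it to eliminate cycles in $G(F)$ while preserving both optimality and $t$-admissibility. Since $F$ must be satisfiable (assuming $S(n,t,2)>0$), no strongly connected component (SCC) of $G(F)$ contains both a literal and its negation, and all literals within a single SCC take a common value in every satisfying assignment. I will use a potential such as the sum of SCC sizes taken over SCCs of size $\geq 2$ (with clause count as a tiebreaker) and show that, as long as $F$ is not yet acyclic, one can modify $F$ to strictly decrease this potential while remaining in the set of optimal $t$-admissible formulas.

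Suppose $G(F)$ has an SCC $C=\{L_1,\dots,L_s\}$ with $s\geq 2$, and partition $\sat_t(F)$ into $T_+$ (assignments where all $L_i$ evaluate to $1$) and $T_-$ (where all $L_i$ evaluate to $0$). In the easy case $T_-=\emptyset$, I would augment $F$ with the unit clauses $L_1,\dots,L_s$. The resulting $F'$ satisfies $\sat(F')\subseteq\sat(F)$, so it is $t$-admissible; and since every element of $\sat_t(F)$ already has $L_i=1$, we get $\sat_t(F')=\sat_t(F)$. The original SCC-defining clauses of $C$ are now implied by the unit clauses and can be dropped without changing $\sat(F')$; the only new edges introduced in the implication graph are $\neg L_i\to L_i$, which cannot lie on a cycle in a satisfiable formula. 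This strictly reduces the potential. The case $T_+=\emptyset$ is handled symmetrically by adding unit clauses $\neg L_i$.

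The hard case is when both $T_+$ and $T_-$ are nonempty, where forcing either polarity would drop weight-$t$ solutions. Here, my plan is to prove that some cycle-defining clause $C^\star$ of $C$ can be removed without breaking $t$-admissibility. Removing a clause only enlarges $\sat(F)$, so the resulting $F'=F\setminus\{C^\star\}$ has $|\sat_t(F')|\geq S(n,t,2)$, which by optimality of $F$ must equal $S(n,t,2)$. The genuine obstacle is to identify such a removable $C^\star$: one must exhibit a clause along the cycle whose violators (the satisfying assignments newly introduced by the removal) all have weight at least $t$. I would leverage the median-closed structure of $2$-SAT solution sets together with the coexistence of both types $T_+,T_-$, which witnesses genuine slack in the SCC constraint. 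Concretely, I would analyze a minimal cycle through $C$ and compare the weights of hypothetical new solutions (which necessarily break the SCC's synchrony) against the established existence of configurations of type $T_+$ (heavy on $C$) and type $T_-$ (light on $C$); this should allow one to argue that at least one edge of the cycle corresponds to a removable clause. Iterating until the potential reaches zero yields the claimed acyclic optimal formula.
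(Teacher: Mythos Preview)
Your easy cases are fine, but the hard case is a genuine gap, not a proof. You assert that when both $T_+$ and $T_-$ are nonempty some cycle clause $C^\star$ can be deleted without introducing a satisfying assignment of weight $<t$, yet you give no argument beyond an appeal to the median-closure of $2$-SAT solution sets and a vague comparison of weights. Median closure is a statement about \emph{satisfying} assignments; what you need is control over the \emph{new} assignments that become satisfying after a deletion, namely those with $L_i=1$, $L_j=0$ for the chosen edge $L_i\to L_j$. Nothing you wrote rules out that for every edge of the cycle such a low-weight witness exists. You would have to exhibit, for at least one edge, that every assignment violating exactly that clause and satisfying the rest of $F$ has weight $\ge t$; the mere existence of a weight-$t$ solution with all $L_i=1$ and another with all $L_i=0$ does not obviously force this.

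The paper avoids this case analysis entirely with a single uniform transformation. Given a cycle containing a positive literal $x$ and another literal $Y$, it passes to $F' = F_{x\gets Y}\wedge(x\vee\neg Y)$: substitute $Y$ for $x$ throughout, then add one clause re-linking $x$ to $Y$. Because $x$ and $Y$ were forced equal in every satisfying assignment of $F$, this preserves $\sat_t$; $t$-admissibility follows from a short weight-counting argument (any hypothetical low-weight $\alpha\in\sat(F')$ must have $\alpha(x)=0$, $\alpha(Y)=1$, contradicting the added clause). Afterwards $x$ no longer lies on any cycle, so the number of variables appearing in cycles strictly drops. This sidesteps both your $T_+/T_-$ split and the search for a removable clause; if you want to salvage your approach, you would need to actually prove the existence of $C^\star$ in the hard case, and that is where the real work lies.
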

\begin{proof}
We start from a $2$-CNF formula $F$ that is $t$-admissible and $|\sat_t(F)| = S(n, t, 2)$. The goal is to to construct another formula $F'$ with the same properties and the additional guarantee that $G(F')$ is acyclic.

Suppose that there exists a cycle in $G(F)$, and let $X, Y$ be any two literals in that cycle. We can assume without loss of generality that the literal $X = x$ is positive (as for any cycle containing only negative literals, there exists a symmetric one containing only positive literals). Let $F_{x \gets Y}$ denote the formula obtained from $F$ by replacing every occurrence of the variable~$x$ by $Y$, and set $F' = F_{x \gets Y} \land (x \lor \neg Y)$. We argue that $F'$ satisfies the claimed guarantees:
\begin{itemize}
\item $F'$ is $t$-admissible: Suppose for the sake of contradiction that there exists a satisfying assignment $\alpha \in \sat(F')$ with weight less than~$t$. Since $F$ only accepts assignments with weight at least~$t$ and $F_{x \gets Y}(\alpha) = 1$, $\alpha$~must be a weight $(t-1)$ assignment which assigns $x = 0$. But since $F'(\alpha) = 1$ it also follows that $\alpha$ assigns $Y = 0$. Hence $F(\alpha) = F_{x \gets Y}(\alpha) = 1$, which is a contradiction.
\item $|\sat_t(F')| = S(n, t, 2)$: Let $\alpha$ be any satisfying assignment of $F$. Since $x, Y$ are contained in a cycle in $G(F)$, it must hold that $x$ and $Y$ are assigned the same truth value under $\alpha$. Hence $F_{x \gets Y}(\alpha) = F(\alpha) = 1$ and thus $F'(\alpha) = 1$. 
\end{itemize}
Notice that $x$ is not part of a cycle in $G(F')$ and thus the number of variables contained in cycles has decreased. Therefore, we can repeat this construction until the implication graph becomes acyclic.
\end{proof}

\begin{lemma} \label{lem:k2-monotone}
There exists an acyclic, monotone $2$-CNF formula $F$ that is $t$-admissible and $|\sat_t(F)| = S(n, t, 2)$.
\end{lemma}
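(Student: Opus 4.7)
I would prove the lemma by induction on the number of non-monotone clauses of $F$, starting from the acyclic $t$-admissible optimal formula guaranteed by the previous lemma. The base case of zero non-monotone clauses is immediate: $F$ is already monotone. For the inductive step, my goal is to construct another acyclic $t$-admissible formula $F'$ satisfying $|\sat_t(F')|\ge|\sat_t(F)|$ and containing strictly fewer non-monotone clauses; by optimality $|\sat_t(F')|=S(n,t,2)$, and the induction then applies to $F'$.

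The central structural observation is that since $G(F)$ is acyclic and nonempty, it contains a sink literal $X$. Outgoing edges from $X$ in $G(F)$ are in bijection with clauses containing $\neg X$, so the sink property says that $\neg X$ does not appear anywhere in $F$; equivalently, the underlying variable $x$ (with $X\in\{x,\neg x\}$) occurs in at most one polarity in $F$. I would first handle the clean case, namely $X=\neg x$, so that $x$ appears only negatively and every clause containing $x$ has the form $(\neg x\vee Y_i)$. In this case I set $F':=F\setminus\{\text{clauses containing }\neg x\}$. Acyclicity is inherited since $G(F')$ is a subgraph of $G(F)$. For $t$-admissibility, take any $\alpha\in\sat(F')$ and let $\tilde\alpha$ agree with $\alpha$ except that $\tilde\alpha(x):=0$; then $\tilde\alpha$ satisfies every non-$x$-clause (unchanged) and every removed clause $(\neg x\vee Y_i)$ (since $\neg x=1$), so $\tilde\alpha\in\sat(F)$ and $t$-admissibility of $F$ gives $\wt(\alpha)\ge\wt(\tilde\alpha)\ge t$. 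Combined with $\sat_t(F)\subseteq\sat_t(F')$ and the fact that every removed clause contained the negative literal $\neg x$, this delivers the desired acyclic $t$-admissible $F'$ with strictly fewer non-monotone clauses.

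The delicate case is when every sink of $G(F)$ is a positive literal, so every variable appearing in just one polarity appears only positively. Naively dropping the clauses at a positive-sink variable $x$ only guarantees $(t-1)$-admissibility, so a different modification is needed. Since $F$ has at least one non-monotone clause, we may pick one of the form $(x\vee\neg z)$: because $\neg z$ is not a sink, the variable $z$ must appear positively as well, and the implication graph has nontrivial structure at $z$. My plan is to apply a local substitution in the spirit of the $x\gets Y$ trick from the previous lemma, tailored to reduce the number of negative occurrences while preserving $t$-admissibility and the weight-$t$ count, thereby reducing to the clean case. This secondary case is the main obstacle of the proof: the challenge is to exploit the acyclic structure at positive sinks to carry out the monotonization so that (a) $t$-admissibility is preserved, (b) no weight-$t$ satisfying assignment is lost, and (c) the number of non-monotone clauses strictly decreases.
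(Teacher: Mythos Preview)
Your clean case is correct: when some variable $x$ appears only negatively, dropping every clause containing $\neg x$ preserves acyclicity and $t$-admissibility (via the $\tilde\alpha$ argument), keeps every weight-$t$ satisfier, and strictly lowers the non-monotone clause count. But the delicate case is not merely a technicality you can patch later---it genuinely occurs, and you have not proved it. For instance, $F=(x_1\lor\neg x_2)\land(x_2\lor x_3)$ is acyclic, non-monotone, and both sinks of $G(F)$ are the positive literals $x_1,x_3$; no variable appears only negatively, so your clean case never fires. For this case you offer only a ``plan'' to mimic the $x\gets Y$ substitution from the previous lemma, but that trick relied on $x$ and $Y$ lying on a common cycle to guarantee they take equal values in every satisfying assignment; in an acyclic graph there is no such equality to exploit, and you have not specified any concrete substitution, let alone verified items (a)--(c). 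As written the proof has a real gap precisely at the step you yourself flag as the main obstacle.

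The paper's argument sidesteps your case split entirely with a different local move. Given any non-monotone clause $(\neg x\lor Y)$, it deletes the implication-graph edge $x\to Y$ and inserts shortcut edges from every predecessor of $x$ to $Y$ and from $x$ to every successor of $Y$; a flip-both-$x$-and-$Y$ argument shows $t$-admissibility is preserved, and all satisfying assignments survive. Crucially, this step need \emph{not} decrease the number of non-monotone clauses, so the paper does not induct on that count. Instead it fixes a topological order of $G(F)$ and observes that the deleted edge is strictly shorter than every inserted edge, so a lexicographic potential on edge-length multiplicities forces termination. The payoff is uniformity: one transformation handles every non-monotone clause, and there is no delicate case.
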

\begin{proof}
Using the previous lemma we can assume that $F$ is a $2$-CNF formula such that
\begin{enumerate}
\item $F$~is acyclic,
\item $F$ is $t$-admissible, and
\item $|\sat_t(F)| = S(n, t, 2)$.
\end{enumerate}
The goal is to turn $F$ into a monotone formula $F'$ while maintaining the three properties.

Suppose that $F$ is not monotone. Then there exists a clause $F = F'' \land (\neg x \lor Y)$ (for some variable $x$ and some literal $Y$). We set
\begin{equation*}
	F' = F'' \land \left(\bigwedge_{Z \in \predecessors(x)} (\neg Z \lor Y)\right) \land \left(\bigwedge_{W \in \successors(Y)} (\neg x \lor W) \right)\!,
\end{equation*}
where $\predecessors(\cdot)$ and $\successors(\cdot)$ denote the predecessors and successors of a vertex in $G(F)$, respectively. In other words, $F'$ is obtained from $F$ by deleting the edge $(x, Y)$ in the implication graph of $F$ and adding edges from the predecessors of $x$ to $Y$ and edges from $x$ to the successors of $Y$; see \cref{fig:implication-graph} for an illustration.

Clearly $F'$ is acyclic and any satisfying assignment of $F$ is also satisfying for $F'$, thus $|\sat_t(F')| = S(n, t, 2)$. It remains to check that $F'$ is $t$-admissible. Suppose that $\alpha$ is a satisfying assignment of $F'$ with weight less than~$t$. Since $\alpha \in \unsat(F)$ and the only clause in $F$ which is missing in $F'$ is $(\neg x \lor Y)$, it must hold that $\alpha$ falsifies that clause, i.e., $\alpha(x) = 1$ and $\alpha(Y) = 0$. Moreover, by the construction of $F'$ we have $\alpha(Z) = 0$ for all $Z \in \predecessors(x)$ and $\alpha(W) = 1$ for all $W \in \successors(Y)$.

Let $\beta$ be the assignment obtained from $\alpha$ by flipping~$x$ and~$Y$, i.e., $\beta(x) = 0$, $\beta(Y) = 1$ and $\beta(Z) = \alpha(Z)$ for all other $Z$. We can check that $\beta$ satisfies $F$: Clearly $(\neg x \lor Y)$ is satisfied, so we only have to check that $F''(\beta) = 1$. Recall that $F''(\alpha) = 1$ and so we only have to check for the implications including $x$ or $Y$. We only focus on $x$; the argument for $Y$ is symmetric. Since $\beta(Z) = \alpha(Z) = 0$ for all predecessors $Z$ of $x$, all implications leading to $x$ are satisfied. Moreover, as $\beta(x) = 0$, all outgoing implications are satisfied as well.

Thus $F(\beta) = 1$, yielding a contradiction: Since $\beta$ is obtained from $\alpha$ by flipping the variable~$x$ and the literal~$Y$ we have $\wt(\beta) \leq \wt(\alpha) < t$, which contradicts the assumption that $F$ is $t$-admissible.

Finally, we argue that repeatedly applying the previous transformation will result in a purely monotone formula. Since $G(F)$ is acyclic, there exists a topological ordering $\iota : V(G(F)) \to [2n]$ (that is, $\iota(X) < \iota(Y)$ whenever there is an edge from $X$ to $Y$). We define the \emph{length} of an edge $(X, Y) \in E(G(F))$ by $\iota(Y) - \iota(X)$. It is easy to see that in the transformation from $F$ to $F'$ we remove an edge $(X, Y)$ from $G(F)$ and add a number of edges with strictly larger length. As the maximum edge length is bounded (by $2n$), this process cannot continue indefinitely and we eventually reach a formula which is monotone.
\end{proof}

\begin{figure}[t]
\begin{center}
\begin{tikzpicture}[
    vertex/.style={
        draw,
        circle,
        minimum size=.5cm,
        semithick,
        large/.style={minimum size=.7cm}
    },
    edge/.style={
        ->,
        >=latex,
        shorten >=.15cm,
        shorten <=.15cm,
        semithick,
        deleted/.style={dashed},
        added/.style={
            dashed,
            color=lipicsYellow
        },
    },
    region/.style={
        color=lipicsGray,
        thick,
        dashed,
        dash pattern=on .16cm off .16cm,
        rounded corners
    }
]
\node[vertex, large] (x) at (-1, 0) {$x$};
\node[vertex, large] (Y) at (1, 0) {$Y$};

\node[vertex] (Z1) at (-3, 1.2) {};
\node[vertex] (Z2) at (-3, 0.4) {};
\node[vertex] (Z3) at (-4, -0.4) {};
\node[vertex] (Z4) at (-3, -1.2) {};
\draw[edge] (Z1) to (x);
\draw[edge] (Z2) to (x);
\draw[edge] (Z2) to (Z4);
\draw[edge] (Z3) to (Z4);
\draw[edge] (Z4) to (x);

\node[vertex] (W1) at (3, 1.2) {};
\node[vertex] (W2) at (4, 0.4) {};
\node[vertex] (W3) at (3, -0.4) {};
\node[vertex] (W4) at (3, -1.2) {};
\draw[edge] (Y) to (W1);
\draw[edge] (Y) to (W3);
\draw[edge] (W3) to (W2);
\draw[edge] (Y) to (W4);
\draw[edge] (W4) to (W2);

\draw[region] (-2.5, 1.7) -- (-2.5, -1.7) -- node[label, below] {$\predecessors(x)$} (-4.5, -1.7) -- (-4.5, 1.7) -- cycle;
\draw[region] (2.5, 1.7) -- (2.5, -1.7) -- node[label, below] {$\successors(Y)$} (4.5, -1.7) -- (4.5, 1.7) -- cycle;

\draw[edge, added, bend right=44] (Z1) to (Y);
\draw[edge, added, bend right=39] (Z2) to (Y);
\draw[edge, added, bend right=28] (Z3) to (Y);
\draw[edge, added, bend right=23] (Z4) to (Y);

\draw[edge, added, bend left=23] (x) to (W1);
\draw[edge, added, bend left=28] (x) to (W2);
\draw[edge, added, bend left=39] (x) to (W3);
\draw[edge, added, bend left=44] (x) to (W4);

\draw[edge, deleted] (x) to node[label, above] {$\neg x \lor Y$} (Y);

\end{tikzpicture}
\end{center}
\caption{Illustrates the implication graph of $F$ in the proof of \cref{lem:k2-monotone}. In the transformation from $F$ to $F'$, the black dashed edge is deleted and the yellow dashed edges are added.} \label{fig:implication-graph}
\end{figure}
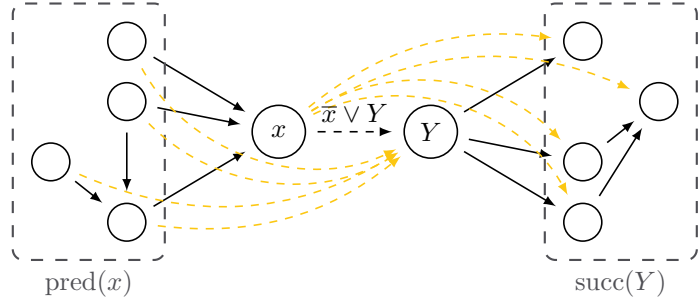

As a final ingredient, we need the following recent result by Song and Yao~\cite{SongYao_2020_arxiv}:

\begin{lemma}[Maximum Number of Maximal Independent Sets~{{{\cite{SongYao_2020_arxiv}}}}] \label{lem:maximal-independent-sets}
Let $n = (n-t)q + r$ where~$q, r$ are the unique integers with $0 \le r < n-t$. Then every $n$-vertex graph has at most $q^{n-t-r}(q+1)^r$ maximal independent sets of size $n - t$.
\end{lemma}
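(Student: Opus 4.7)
The plan is to establish the upper bound by induction combined with a structural reduction to disjoint unions of cliques. Setting $s := n - t$, I first verify tightness via the canonical construction: a disjoint union of $s$ cliques, $s - r$ of size $q$ and $r$ of size $q + 1$ (where $n = sq + r$, $0 \le r < s$, so that $(s-r)q + r(q+1) = n$). Every maximal independent set must pick exactly one vertex per clique — by independence within a clique and by domination of its other vertices — yielding exactly $q^{s-r}(q+1)^r$ maximal independent sets, all of size $s$.

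For the upper bound I would first try induction on $s$. The base $s = 1$ is immediate: maximal IS of size one are isolated vertices, so at most $n$ of them, matching $q = n$ when $r = 0$. For the inductive step, pick a vertex $v$ and split max IS of size $s$ into those containing $v$ — in bijection with max IS of size $s - 1$ in $G - N[v]$ on $n - 1 - \deg(v)$ vertices, and thus bounded by the inductive hypothesis — and those not containing $v$, each of which must intersect $N(v)$. A promising tactic is to choose $v$ whose degree matches $q$ or $q - 1$ to align with the recursions $q^{s-r}(q+1)^r = q \cdot q^{s-r-1}(q+1)^r$ (when $r < s-1$) or $(q+1) \cdot q^{s-r}(q+1)^{r-1}$ (when $r \ge 1$), and to bound the ``not containing $v$'' branch by the inductive hypothesis on $G - v$ restricted to sets intersecting $N(v)$.

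If the direct induction stalls, the fallback plan is a compression argument reducing to disjoint-clique graphs: whenever $G$ contains an induced $P_3$ on some triple $u, v, w$ (with $uv, vw \in E(G)$ and $uw \notin E(G)$), apply a local modification that does not decrease the number of max IS of size exactly $s$ while strictly reducing a potential function such as the count of induced $P_3$s. The terminal graphs are disjoint unions of $m$ cliques with sizes $a_1, \dots, a_m$ summing to $n$; max IS of size $s$ exist only when $m = s$, in which case the count is $\prod_i a_i$, which by the integer form of AM-GM is maximized at $q^{s-r}(q+1)^r$ precisely when the $a_i$ are as balanced as possible.

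The main obstacle is pinning down a local modification that provably preserves max IS of size \emph{exactly} $s$: edge additions can destroy maximality (by breaking independence of existing sets) and edge deletions can create larger independent supersets that absorb existing maximal sets, so a naive single-edge shift is unlikely to work — the shift must be chosen globally, perhaps by symmetrizing entire neighborhoods rather than modifying a single edge. As a secondary route, I would also consider an entropy argument applied to a uniformly random maximal IS $I$ of size $s$ via a Shearer-type inequality along a canonical map $\phi_I : V \to I$ sending each vertex to itself (if in $I$) or to a designated IS-neighbor (otherwise), aiming to match $\log q^{s-r}(q+1)^r$ via a tight analysis of multinomial profiles over integer compositions of $n$ into $s$ positive parts.
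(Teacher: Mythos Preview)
The paper does not supply its own proof of this lemma: it is stated with attribution to Song and Yao~\cite{SongYao_2020_arxiv} and used as a black box in the proof of \cref{thm:2-cnf-bound}. So there is no in-paper argument to compare your proposal against.

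On the merits of your sketch itself: the tight example and the AM--GM endgame (once reduced to disjoint cliques) are correct, but none of your three routes is actually carried through. The induction you set up is on $s$, yet the ``not containing $v$'' branch lands you in $G-v$ with the \emph{same} $s$ and smaller $n$ --- so you really need induction on $n$, and then the arithmetic of matching $q^{s-r}(q+1)^r$ after deleting a vertex of a chosen degree does not line up as cleanly as you suggest (this is precisely why the fixed-size version is harder than the Moon--Moser total count). The compression route correctly names the terminal objects and the obstacle, but leaves the actual move unspecified; finding a monovariant-decreasing local operation that never loses maximal independent sets of size exactly $s$ is the entire content of the result, not a detail to be filled in. The entropy/Shearer idea is too vague to evaluate. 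In short, you have identified the right extremal configuration and plausible proof shapes, but the proposal is a plan with its key step missing rather than a proof --- which is fine here, since the paper itself defers to the cited reference.
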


\begin{proof}[Proof of \cref{thm:2-cnf-bound}]
We have already proven the lower bound, so it suffices to show that $S(n, t, 2) \leq q^{n-t-r} (q+1)^r$. By \cref{lem:k2-monotone} we can assume that there exists a monotone $2$-CNF formula $F$ that is $t$-admissible and $|\sat_t(F)| = S(n, t, 2)$. We interpret $F$ as a graph $G = (V, E)$ as follows. Let $V$ be the variables of $F$ and add an edge $(x, y)$ to~$G$ if the clause $(x\lor y)$ appears in $F$.

Observe that any satisfying assignment for $F$ corresponds to a vertex cover of $G$. Moreover, there is a one-to-one correspondence between satisfying assignments with weight $t$ and \emph{minimal} vertex covers of size $t$ in $G$, since $F$ is $t$-admissible. In turn, it is well known that these correspond to maximal independent sets of size $n-t$ in $G$. Thus, the number of satisfying assignments with weight $t$ for any formula $F$ is bounded by the maximum number of maximal independent sets of size $n-t$ in $n$-vertex graphs. Finally, we apply \cref{lem:maximal-independent-sets}.
\end{proof}

\section{Connection to the Tur\'an Problem} \label{sec:turan}
In this section we prove that for certain parameters with $k\ge 3$, our problem of bounding $S(n, t, k)$ coincides with the famously open \emph{Tur\'an Problem}. We start with a formal definition of the Tur\'an Problem. 

A \emph{$k$-set system} over some universe $U$ is a collection $\mathcal S = \{S_1, \dots, S_m\}$ of size-$k$ subsets of~$U$. We refer to $m$ as the \emph{size} of $\mathcal S$.

\begin{definition}[Tur\'an and Covering numbers]
Let $k \leq q \leq n$ be nonnegative integers.
\begin{itemize}
\item The \emph{Tur\'an number $T(n, q, k)$} is the size of the smallest $k$-set system $\mathcal S$ over the universe~$[n]$ such that every size-$q$ subset of $[n]$ contains at least one set $S \in \mathcal S$.
\item The \emph{covering number $C(n, q, k)$} is the size of the smallest $q$-set system $\mathcal S$ over the universe~$[n]$ such that every size-$k$ subset of $[n]$ is contained in at least one set $S \in \mathcal S$.
\end{itemize}
\end{definition}

The task of  determining Tur\'an numbers $T(n, q, k)$ is called the \emph{Tur\'an Problem}. This problem, and also the problem of determining 
the covering numbers $C(n, q, k)$, received considerable attention in the literature \cite{Sidorenko_1995_gc, Keevash_2011} and both are still far from being resolved. We show that for $t = n-k$, determining  $S(n, t, k)$ is equivalent to the  Tur\'an problem $T(n,k+1,k)$. This shows why proving general optimality for all $t$, when $k\ge 3$, like we did in \cref{sec:k2} is hard.

\begin{theorem} \label{thm:turan}
For all $k \leq n$:
\begin{equation*}
	T(n, k+1, k) = C(n, n-k, n-k-1) = \tbinom nk - S(n, n-k, k).
\end{equation*}
\end{theorem}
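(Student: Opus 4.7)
My plan is to prove the three-way equality in two parts: the duality $T(n,k+1,k) = C(n,n-k,n-k-1)$ by complementation, and the matching two-sided identity for $S$. Throughout I identify a weight-$(n-k)$ assignment with its zero set $Q \subseteq [n]$, which is a $k$-set, so there are $\binom{n}{k}$ such assignments in total. For the duality, the bijection $I \mapsto [n] \setminus I$ sends $k$-set systems to $(n-k)$-set systems over $[n]$, and under it the containment $I \subseteq Q^+$ becomes $[n] \setminus Q^+ \subseteq [n] \setminus I$; thus a family $\mathcal{S}$ of $k$-sets covers every $(k+1)$-set in the Tur\'an sense exactly when its complementary family of $(n-k)$-sets covers every $(n-k-1)$-set in the covering sense. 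Taking minima gives $T(n,k+1,k) = C(n, n-k, n-k-1)$.

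For the lower bound $S(n,n-k,k) \ge \binom{n}{k} - T(n,k+1,k)$, I would fix a minimum Tur\'an family $\mathcal{S}$ and form the monotone $k$-CNF $F = \bigwedge_{I \in \mathcal{S}} \bigvee_{i \in I} x_i$. Any assignment of weight less than $n-k$ has a zero set of size at least $k+1$, which contains some $I \in \mathcal{S}$ by the Tur\'an property; hence $F$ is $(n-k)$-admissible. A weight-$(n-k)$ assignment with zero set $Q$ is satisfied iff no $I \in \mathcal{S}$ lies inside $Q$, which, since $|I| = |Q| = k$, just says $Q \notin \mathcal{S}$. So $F$ accepts exactly $\binom{n}{k} - |\mathcal{S}|$ such assignments, proving the bound.

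For the matching upper bound $S(n, n-k, k) \le \binom{n}{k} - T(n, k+1, k)$, I would take any $(n-k)$-admissible $k$-CNF $F$ and let $\mathcal{T}$ be the family of $k$-sets $Q$ whose corresponding weight-$(n-k)$ assignment is \emph{rejected} by $F$; then $|\sat_{n-k}(F)| = \binom{n}{k} - |\mathcal{T}|$, so it suffices to prove $\mathcal{T}$ is a Tur\'an family. Given any $(k+1)$-set $Q^+$, admissibility ensures some clause $C$ of $F$ is falsified by the assignment with zero set $Q^+$: writing $Z$ and $N$ for the positive- and negative-literal variables of $C$, we have $Z \subseteq Q^+$, $N \cap Q^+ = \emptyset$, and $|Z| \le |C| \le k$. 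Since $|Z| \le k < k+1 = |Q^+|$, I can pick $j \in Q^+ \setminus Z$; then $Q := Q^+ \setminus \{j\}$ satisfies $Z \subseteq Q$ and $N \cap Q = \emptyset$, so $C$ is still falsified by the assignment with zero set $Q$, giving $Q \in \mathcal{T}$ and $Q \subseteq Q^+$.

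The main subtlety is this final pushing-down step, which accommodates possibly non-monotone clauses: it works precisely because the $(k+1)$-set $Q^+$ has strictly more elements than the positive-literal set $Z$ of any clause of width at most $k$. This also highlights why $t = n-k$ is the natural boundary for this Tur\'an connection: one index of slack is exactly what the $(k+1)$-covering condition demands, so the argument does not extend as cleanly to smaller thresholds.
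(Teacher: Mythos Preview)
Your proof is correct, and the two directions differ from the paper's in an interesting way.

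For the lower bound $S(n,n-k,k) \ge \binom{n}{k} - T(n,k+1,k)$, your construction via a minimum Tur\'an family is essentially the paper's ``other direction'' (the paper phrases it via covering systems, but the content is identical after complementation).

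For the upper bound, however, your route is genuinely different and more elementary. The paper first invokes two auxiliary lemmas (\cref{lem:large-threshold-monotone,lem:large-threshold-width-k}) to replace an arbitrary optimal $(n-k)$-admissible $k$-CNF by a monotone one whose clauses all have width exactly $k$; once that reduction is done, each clause kills exactly one weight-$(n-k)$ assignment, so the clause set \emph{is} (up to complementation) a covering system of the right size. Your argument bypasses both lemmas: you work directly with the family $\mathcal{T}$ of rejected weight-$(n-k)$ assignments and show it is a Tur\'an family for $(k+1)$-sets via the pushing-down step, which absorbs non-monotone clauses and clauses of width $<k$ in one stroke because $|Z| \le k < |Q^+|$ always leaves a coordinate to drop. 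This is slicker and self-contained for the theorem at hand; what the paper's approach buys in exchange is that \cref{lem:large-threshold-monotone,lem:large-threshold-width-k} are structural statements of independent interest (monotone formulas suffice at threshold $n-k$, and optimal ones use full-width clauses), which your argument does not establish.
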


To prove \cref{thm:turan}, we first establish two facts about optimal $k$-CNF formulas for $S(n, n-k, k)$: We will prove that there exists an optimal \emph{monotone} $k$-CNF formula (\cref{lem:large-threshold-monotone}) where all the clauses have width exactly equal to $k$ (\cref{lem:large-threshold-width-k}). As a consequence we can view an optimal formula for $S(n, n-k, k)$ as a set system, which can be related to $T(n, k+1, k)$ and $C(n, n-k, n-k-1)$.

\begin{lemma} \label{lem:large-threshold-monotone}
For $n \geq k$, there exists a monotone $(n-k)$-admissible $k$-CNF formula $F$ 
where $|\sat_{n-k}(F)| = S(n, n-k, k)$.
\end{lemma}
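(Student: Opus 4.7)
The plan is to take any optimal $(n-k)$-admissible $k$-CNF $F$ with $|\sat_{n-k}(F)| = S(n,n-k,k)$, extract only the zero-set data of its accepted weight-$(n-k)$ assignments, and use this to build a monotone replacement directly. Concretely, for $\alpha \in \{0,1\}^n$ write $Z(\alpha) = \{i : \alpha_i = 0\}$ and set
\[ \bar A = \{\, Z(\alpha) : \alpha \in \sat_{n-k}(F)\,\} \subseteq \binom{[n]}{k}. \]
Then define the monotone $k$-CNF
\[ F' := \bigwedge_{S \in \binom{[n]}{k} \setminus \bar A}\; \bigvee_{i\in S} x_i. \]
For any weight-$(n-k)$ assignment $\alpha$ with $Z := Z(\alpha)$, a clause $\bigvee_{i\in S} x_i$ of $F'$ is falsified iff $S \subseteq Z$, which for $|S|=|Z|=k$ forces $S = Z$. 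Hence $F'(\alpha) = 1$ iff $Z \in \bar A$, so $F'$ accepts precisely the weight-$(n-k)$ assignments accepted by $F$, giving $|\sat_{n-k}(F')| = S(n,n-k,k)$.

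The main obstacle is to verify that $F'$ is $(n-k)$-admissible: every assignment whose zero set $Z$ has size strictly greater than $k$ must be rejected. By construction, $F'$ accepts such an $\alpha$ iff every $k$-subset of $Z$ lies in $\bar A$. It suffices to rule this out when $|Z| = k+1$, since any larger $Z$ contains a $(k{+}1)$-subset whose $k$-subsets are a fortiori $k$-subsets of $Z$.

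So assume for contradiction that $Z \subseteq [n]$ has $|Z| = k+1$ and that every $Z \setminus \{i\}$ (for $i \in Z$) lies in $\bar A$; equivalently, $F$ accepts each assignment $\alpha^{(i)}$ with zero set $Z \setminus \{i\}$. Since the assignment $\alpha_Z$ with zero set $Z$ has weight $n-k-1 < n-k$, the $(n-k)$-admissibility of $F$ yields some clause $C$ of $F$ that $\alpha_Z$ falsifies. Then every positive literal of $C$ must be indexed inside $Z$ and every negative literal outside $Z$. Fix any $i \in Z$: the assignment $\alpha^{(i)}$ differs from $\alpha_Z$ only at position $i$, so every literal of $C$ indexed by $j \neq i$ has the same value under $\alpha^{(i)}$ as under $\alpha_Z$, hence is still false; and $\bar x_i$ cannot appear in $C$, because it would already be true under $\alpha_Z$. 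So the only way $\alpha^{(i)}$ can satisfy $C$ is if the positive literal $x_i$ lies in $C$. Since this must hold for \emph{every} $i \in Z$, the clause $C$ would contain $k+1$ distinct positive literals, contradicting $|C|\le k$. This width-based case analysis is the technical crux of the argument.
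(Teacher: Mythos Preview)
Your argument is correct and takes a genuinely different route from the paper's. The paper proceeds by an iterative local repair: it picks a single non-monotone clause $C = (x_1\lor\cdots\lor x_p\lor\neg y_1\lor\cdots\lor\neg y_q)$, deletes it, and replaces it by the family of monotone width-$k$ clauses obtained by padding $(x_1\lor\cdots\lor x_p)$ with every $(k-p)$-subset of the remaining variables $z_1,\dots,z_{n-p-q}$; it then verifies that admissibility and the set of weight-$(n-k)$ solutions are preserved, and repeats until no non-monotone clause is left. Your approach instead jumps directly to the canonical monotone formula determined by the accepted weight-$(n-k)$ assignments and proves admissibility globally via the width obstruction on a hypothetical $(k{+}1)$-zero-set. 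This is cleaner for this specific threshold $t=n-k$: it exploits the fact that a monotone width-$k$ clause kills exactly one weight-$(n-k)$ assignment, and your ``every $i\in Z$ forces $x_i\in C$'' counting pinpoints exactly why width $k$ is the bottleneck. As a bonus, your $F'$ already has all clauses of width exactly $k$, so you get the existence half of the paper's subsequent width-$k$ lemma for free. The trade-off is that the paper's clause-by-clause replacement is a more generic monotonization template that could in principle adapt to other thresholds, whereas your construction is tightly tied to $t=n-k$.
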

\begin{proof}
Let $F$ be an arbitrary $(n-k)$-admissible $k$-CNF formula where  $|\sat_{n-k}(F)| = S(n, n-k, k)$. We will show how to transform~$F$ into formula $F'$ so that
\begin{enumerate}
\item $F'$ is $(n-k)$-admissible,\label{lem:large-threshold-monotone:itm:level}%
\item $\sat_{n-k}(F) \subseteq \sat_{n-k}(F')$, and\label{lem:large-threshold-monotone:itm:sat}%
\item $F'$ contains strictly fewer non-monotone clauses than $F$.\label{lem:large-threshold-monotone:itm:monotone}%
\end{enumerate}
The lemma follows by repeating this construction until $F'$ is monotone.

Let $C = (x_1 \lor \dots \lor x_p \lor \neg y_1 \lor \dots \lor \neg y_q)$ be a non-monotone clause in $F$, and let $z_1, \dots, z_{n-p-q}$ denote the remaining variables in $F$. Moreover, let $F''$ denote the formula obtained from $F$ after removing~$C$. Let 
\begin{equation*}
	F' = F'' \land \bigwedge_{\substack{S \subseteq [n-p-q]\\|S| = k-p}} \left(\left(\bigvee_{i \in \text{\raisebox{0pt}[0pt][0pt]{$[p]$}}} x_i\right) \lor \left(\bigvee_{i \in S} z_i\right)\right).
\end{equation*}
Clearly $F'$ contains fewer non-monotone clauses than $F$, satisfying \cref{lem:large-threshold-monotone:itm:monotone}.

To show that \cref{lem:large-threshold-monotone:itm:level} holds, suppose that there exists an assignment $\alpha$ of weight less than $n-k$ which satisfies $F'$. Since $F(\alpha) = 0$, $\alpha$~must falsify $C$ and thus assigns $x_i = 0$ and $y_j  = 1$ for all $i, j$. Since $\wt(\alpha) < n - k$ there exist at least $(n - p - q) - (\wt(\alpha) - q) > k - p$ variables $z_i$ which are set to false under~$\alpha$, which guarantees that $F'$ is falsified.

Finally, we show that \cref{lem:large-threshold-monotone:itm:sat} also holds. Let $\alpha \in \sat_{n-k}(F)$; we show that $\alpha \in \sat_{n-k}(F')$. Clearly~$\alpha$ satisfies all clauses in $F''$, so we only have to focus on the new clauses. Observe that $\alpha$ also satisfies~$C$, hence there are only two cases: Either $\alpha$ assigns $x_i = 1$ for some $i$, or $\alpha$ assigns $x_i = 0$ for all $i$ and $y_j = 0$ for some $j$. In the former case every new clause is directly satisfied. In the latter case, $\alpha$ switches off at most $(n-\wt(\alpha)) - (p+1) = k - p - 1$ variables $z_i$, and thus all new clauses are satisfied as well. 
\end{proof}

\begin{lemma} \label{lem:large-threshold-width-k}
Let $n \geq k+t$ and let $F$ be a monotone $t$-admissible $k$-CNF formula such that $|\sat_t(F)| = S(n, t, k)$. If no clause in $F$ is redundant (i.e., we cannot remove any clause without violating $t$-admissibility of $F$), then all clauses in $F$ have width exactly $k$.
\end{lemma}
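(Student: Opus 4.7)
The plan is to proceed by contradiction: suppose $F$ contains a non-redundant clause $C = (x_1 \vee \cdots \vee x_w)$ with $w < k$, and construct a monotone $t$-admissible $k$-CNF formula $F^*$ satisfying $|\sat_t(F^*)| > |\sat_t(F)|$, contradicting $|\sat_t(F)| = S(n,t,k)$.

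First, I would extract a structured witness of non-redundancy. Because $F \setminus \{C\}$ admits a weight-$<t$ satisfying assignment while $F$ does not, any such witness must set $x_1 = \cdots = x_w = 0$ (otherwise it would satisfy $C$, hence all of $F$). Using monotonicity of $F \setminus \{C\}$, I can flip $0$-bits outside $\{x_1,\dots,x_w\}$ to $1$-bits to lift the witness to an assignment $\alpha^*$ of weight exactly $t - 1$. Write $Z := \{j : \alpha^*_j = 0\}$; then $|Z| = n - t + 1 \geq k + 1$ by the assumption $n \geq k+t$, and $\{x_1, \ldots, x_w\} \subseteq Z$.

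Next, I would fix an arbitrary element $y^* \in Z \setminus \{x_1, \ldots, x_w\}$, set $Y := Z \setminus \{x_1, \ldots, x_w, y^*\}$ (of size $n-t-w$), and define
\begin{equation*}
F^* \;:=\; (F \setminus \{C\}) \,\cup\, \Bigl\{\, C \vee \textstyle\bigvee_{y \in T} y \;:\; T \subseteq [n]\setminus\{x_1,\dots,x_w\},\ |T| = k-w,\ T \not\subseteq Y \,\Bigr\}.
\end{equation*}
All new clauses have width $k$, so $F^*$ is a monotone $k$-CNF. Let $\beta$ be obtained from $\alpha^*$ by flipping the single bit $y^*$ from $0$ to $1$, so $\wt(\beta) = t$ and the zero-set of $\beta$ is $Z \setminus \{y^*\}$. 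Monotonicity guarantees $\beta$ satisfies $F \setminus \{C\}$, and although $\beta$ falsifies $C$, each chosen new clause corresponds to $T \not\subseteq Y$, which by definition meets $([n] \setminus \{x_1,\dots,x_w\}) \setminus Y = ([n] \setminus Z) \cup \{y^*\}$, i.e., the $1$-set of $\beta$. Hence $\beta \in \sat_t(F^*) \setminus \sat_t(F)$; meanwhile every $\alpha \in \sat_t(F)$ sets some $x_i = 1$ and automatically satisfies every new clause, so $\sat_t(F) \subsetneq \sat_t(F^*)$.

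The delicate step is verifying that $F^*$ is $t$-admissible. If a weight-$<t$ assignment $\gamma$ satisfied $F^*$, then $\gamma_{x_i} = 0$ for all $i$ (else $\gamma$ satisfies $F$), and satisfying every kept new clause forces every size-$(k-w)$ subset of $Z' := \{j : \gamma_j = 0\} \setminus \{x_1,\dots,x_w\}$ to lie inside $Y$. Since $|Z'| \geq n-t+1-w \geq k-w+1$, any element of $Z'$ outside $Y$ could be completed to such a subset outside $Y$, giving a contradiction; therefore $Z' \subseteq Y$, equivalently $\{j : \gamma_j = 0\} \subseteq Z \setminus \{y^*\}$, so the $1$-set of $\gamma$ contains $([n] \setminus Z) \cup \{y^*\}$, which has size $t$, contradicting $\wt(\gamma) < t$. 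Together with $|\sat_t(F^*)| > |\sat_t(F)|$, this contradicts $|\sat_t(F)| = S(n,t,k)$. The key subtlety is the choice of $Y$: removing a single width-$k$ extension of $C$ yields strict improvement in $|\sat_t|$ only in the boundary case $n = k+t$, whereas removing precisely the family of clauses indexed by $T \subseteq Y$ lets $\beta$ slip into $\sat_t(F^*)$ uniformly for all $n \geq k+t$.
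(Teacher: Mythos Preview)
Your proof is correct and shares the paper's overall strategy: assume a short clause $C=(x_1\lor\cdots\lor x_w)$ exists, use non-redundancy to obtain a weight-$(t-1)$ assignment $\alpha^*$ satisfying $F\setminus\{C\}$ but not $C$, then replace $C$ by a family of longer clauses so that flipping one zero-bit of $\alpha^*$ yields a new weight-$t$ satisfying assignment while $t$-admissibility is preserved.

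The constructions differ in execution. The paper's replacement is considerably simpler: letting $y_1,\dots,y_{t-1}$ be the $1$-variables of $\alpha^*$ and picking one fresh zero-variable $y_t\notin\{x_1,\dots,x_w,y_1,\dots,y_{t-1}\}$ (which exists since $n\ge k+t>w+t$), it replaces $C$ by just the $t$ clauses $(x_1\lor\cdots\lor x_w\lor y_j)$, $j\in[t]$, of width $w+1\le k$. A weight-$<t$ assignment with all $x_i=0$ cannot set all of $y_1,\dots,y_t$ to $1$, so admissibility follows by a one-line pigeonhole. Your construction instead pads $C$ all the way to width $k$ and must carefully exclude precisely the extensions $T\subseteq Y$ so that the designated assignment $\beta$ slips through; this works, but the admissibility verification (forcing $Z'\subseteq Y$) is more involved. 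The paper's approach sidesteps the subtlety you flag about the boundary case $n=k+t$ by extending the clause by only one variable rather than $k-w$, yielding a shorter argument at the cost of producing clauses that are still not of full width $k$ (which is irrelevant for the contradiction).
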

\begin{proof}
Suppose that $F$ contains a clause of width $k' < k$, say $F = F' \land (x_1 \lor \dots \lor x_{k'})$. Since no clause in $F$ is redundant there exists an assignment $\alpha$ with weight $\wt(\alpha) < t$ such that $F(\alpha) = 0$ and $F'(\alpha) = 1$. We can turn $\alpha$ into a satisfying assignment of $F$ by switching on one of the variables $x_1, \dots, x_{k-1}$; all clauses in $F'$ remain satisfied since we assume that $F$ is monotone. It follows that $\wt(\alpha) = t-1$. Let $y_1, \dots, y_{t-1}$ denote the variables turned on by $\alpha$: We are assuming that $n \geq k+t > k'+t$ and thus there exists another fresh variable $y_t \not\in \{x_1, \dots, x_{k'}, y_1, \dots, y_{t-1}\}$. Let $\beta$ be the assignment which turns on $y_t$ and otherwise leaves $\alpha$ unchanged. Then $\wt(\beta) = t$ and $F(\beta) = 0$ as the designated clause is not satisfied.

Let $G$ be the $k$-CNF formula obtained from $F'$ by adding all clauses $(x_1 \lor \dots \lor x_{k'} \lor y_j)$ for $j \in [t]$; clearly $\sat(F) \subseteq \sat(G)$. Notice that $\beta$ is satisfying for $G$ as $\beta$ satisfies all of the added clauses and also $F'(\beta) = F'(\alpha) = 1$. We will finally prove that $G$ is $t$-admissible which then contradicts the optimality of $F$. Indeed, for any assignment $\gamma$ with weight less than~$t$ we either have $F'(\gamma) = 0$ (in which case also $G(\gamma) = 0$) or the only clause in $F$ which is falsified is $(x_1 \lor \dots \lor x_{k'})$. In that case $\gamma$ is also falsifying for $G$ as it cannot turn on all $t$ variables $y_1, \dots, y_t$.
\end{proof}

\begin{proof}[Proof of \cref{thm:turan}]
It is elementary to check that $T(n, k+1, k) = C(n, n-k, n-k-1)$. Hence it suffices to prove that $C(n, n-k, n-k-1) = \binom nk - S(n, n-k, k)$.

We start with the upper bound. Let $F$ be a $(n-k)$-admissible $k$-CNF formula such that  $|\sat_{n-k}(F)| = S(n, n-k, k)$. By \cref{lem:large-threshold-monotone,lem:large-threshold-width-k} we can assume that $F$ is monotone and contains only clauses of width exactly $k$. Let $C_1, \dots, C_m$ denote the clauses of $F$ and view each clause as a size-$k$ set $C_i \subseteq [n]$ in the obvious way. Observe that each clause rejects a single assignment of weight $n-k$ and therefore, $S(n, n-k, k) = \binom n{n-k} - m = \binom nk - m$.

Next, consider the $(n-k)$-set system $\mathcal S = \{S_1, \dots, S_m\}$ where $S_i = [n] \setminus C_i$. We claim that every size-$(n-k-1)$ subset $T \subseteq [n]$ is contained in some set $S_i$. Indeed, let~$\alpha$ be the assignment which turns on all variables in $T$. Since $F$ is $(n-k)$-admissible, $\alpha$ must be falsifying for~$F$ and thus there exists a clause $C_i$ which is disjoint from $T$. It follows that $T \subseteq [n] \setminus C_i$. Finally, by the definition of covering numbers, $C(n, n-k, n-k-1) \leq m = \binom nk - S(n, n-k, k)$.

The proof for the other direction is analogous. Given an optimal set system $\mathcal S = \{S_1, \dots, S_m\}$ for the covering number $m = C(n, n-k, n-k-1)$, we construct a $k$-CNF formula containing the monotone clauses $C_1, \dots, C_m$ with $C_i = [n] \setminus S_i$. It is easy to see that $F$ is $t$-admissible for $t \geq n-k$ and thus $S(n, n-k, k) \geq \binom nk - C(n, n-k, n-k-1)$.
\end{proof}

\section{Connection to Steiner Systems} \label{sec:Steiner}
In this section we prove that, for certain parameters, the problem of constructing $t$-admissible $k$-CNF formulas which capture $S(n, t, k)$ assignments of weight $t$ coincides with that of constructing Steiner systems -- a special kind of designs.

Recall that a \emph{$k$-set system} over a universe $U$ is a collection $\mathcal{S}= \{S_1, \dots, S_m\}$ of size-$k$ subsets of~$U$. We refer to $m$ as the \emph{size} of $\mathcal S$.

\begin{definition}[Steiner System]
Let $r \leq q \leq n$ be nonnegative integers.
A \emph{Steiner system $(n, q, r)$} is a $q$-set system $\mathcal{S}$ over the universe~$[n]$ such that every size-$r$ subset of $[n]$ is contained in exactly one set $S \in \mathcal{S}$.
\end{definition}

The existence and explicit construction of Steiner systems has received considerable attention in the design literature \cite{ColbournDinitz_2006}. In fact, only recently the asymptotic existence of such systems was shown when $r$ and $q$ are constants~\cite{Keevash_2019_arxiv}.

Steiner set systems are related to optimal $t$-admissible $k$-CNF formulas when $n-k$ and $t$ are constants.

\begin{theorem}\label{thm:steiner}
Let $n > k+t$.
If Steiner systems with parameters $(n, n-k, t-1)$ exist, then $S(n,t,k) = \frac{k}{t} \cdot \binom{n}{t-1}$.
\end{theorem}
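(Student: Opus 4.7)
The plan is to establish both directions separately. The lower bound $S(n,t,k) \ge \frac{k}{t}\binom{n}{t-1}$ will follow from an explicit construction built out of the assumed Steiner system, and the matching upper bound $S(n,t,k) \le \frac{k}{t}\binom{n}{t-1}$ will follow from a simple double-counting argument that does not use any Steiner assumption at all.

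For the lower bound, given a Steiner system $\mathcal{S}$ with parameters $(n,n-k,t-1)$, I would define the monotone $k$-CNF $F_\mathcal{S} := \bigwedge_{S \in \mathcal{S}} C_S$ whose clauses are $C_S := \bigvee_{i \in [n]\setminus S} x_i$; since $|[n] \setminus S| = k$, each clause has width exactly $k$. To verify $t$-admissibility, I would take any assignment $\alpha$ with $\wt(\alpha) \le t-1$, extend $\{i : \alpha_i = 1\}$ to a $(t-1)$-subset $T$, and use the Steiner property to obtain the unique $S \in \mathcal{S}$ containing $T$. Then $\{i : \alpha_i = 1\} \subseteq S$, so every variable of $C_S$ is $0$ under $\alpha$ and the clause is falsified. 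To count $|\sat_t(F_\mathcal{S})|$, I would identify weight-$t$ assignments with $t$-subsets $W \in \binom{[n]}{t}$ and observe that $W$ fails iff $W \subseteq S$ for some $S \in \mathcal{S}$; moreover this $S$ is unique, since $W \subseteq S_1 \cap S_2$ with $S_1 \ne S_2$ would let any $(t-1)$-subset of $W$ lie in two Steiner blocks. Hence the number of satisfying $W$ equals $\binom{n}{t} - |\mathcal{S}|\binom{n-k}{t}$, and using $|\mathcal{S}|\binom{n-k}{t-1} = \binom{n}{t-1}$ together with $\binom{m}{j} = \binom{m}{j-1}(m-j+1)/j$ I would simplify this to $\binom{n}{t-1}\big((n-t+1) - (n-k-t+1)\big)/t = \frac{k}{t}\binom{n}{t-1}$.

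For the upper bound, I would fix an arbitrary $t$-admissible $k$-CNF $F$ on $n$ variables, set $N := |\sat_t(F)|$, and count pairs $(\alpha, T)$ with $\alpha \in \sat_t(F)$ and $T \subseteq \{i : \alpha_i = 1\}$, $|T|=t-1$. Each $\alpha$ contributes exactly $t$ such pairs, giving $Nt$ total. Conversely, for each fixed $T \in \binom{[n]}{t-1}$, let $\alpha_T \in \{0,1\}^n$ be the indicator of $T$; every $\alpha$ with $T \subseteq \{i : \alpha_i = 1\}$ is obtained from $\alpha_T$ by flipping a single $v \in [n]\setminus T$ from $0$ to $1$. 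By $t$-admissibility, $\alpha_T \notin \sat(F)$, so some clause $C$ is falsified by $\alpha_T$. Note that $\neg x_v$ cannot appear in $C$, since $\alpha_T(v)=0$ would otherwise already satisfy $C$; and flipping $v$ turns $C$ from false to true only if the positive literal $x_v$ appears in $C$. Hence $v$ must lie among the at most $k$ positive literals of $C$, so at most $k$ extensions per $T$ are valid. Summing gives $Nt \le k\binom{n}{t-1}$, matching the lower bound.

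The only subtle point is the exactness of the falsifying-set count in the lower bound: this relies critically on the uniqueness clause of the Steiner property, which forces the families $\{W \in \binom{[n]}{t} : W \subseteq S\}$ to partition (rather than merely cover) the falsifying weight-$t$ sets. Without uniqueness -- e.g., under a mere covering design as in \cref{sec:turan} -- different blocks would share weight-$t$ subsets and the identity would break. The upper bound, in contrast, requires no structural preprocessing of $F$ (no monotonization in the style of \cref{lem:large-threshold-monotone} and no width-$k$ reduction as in \cref{lem:large-threshold-width-k}), since the double count uses only $t$-admissibility directly.
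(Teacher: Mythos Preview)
Your proof is correct and follows essentially the same route as the paper: the identical monotone $k$-CNF built from block complements for the lower bound, and the same double count between weight-$t$ and weight-$(t{-}1)$ assignments for the upper bound (you bound the number of \emph{satisfying} extensions of each $(t{-}1)$-set by $k$, while the paper equivalently bounds the number of \emph{unsatisfying} extensions from below by $n-t-k+1$). The one presentational difference is in the lower-bound count: you obtain $|\sat_t(F_{\mathcal S})|=\binom{n}{t}-|\mathcal S|\binom{n-k}{t}$ directly by showing the families $\{W\in\binom{[n]}{t}:W\subseteq S\}$ are pairwise disjoint, whereas the paper derives the same value via a bipartite degree argument between $\unsat_t(F)$ and the weight-$(t{-}1)$ layer; both hinge on the uniqueness clause of the Steiner system in the same way.
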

\begin{proof}[Proof. The Lower Bound]
We start proving the lower bound $S(n, t, k) \geq \frac kt \cdot \binom{n}{t-1}$. Let $\mathcal{S} = \{S_1, \dots, S_m\}$ be a Steiner system with parameters $(n,n-k,t-1)$. We construct a monotone formula $F$ the over variables $x_1, \dots, x_n$ as follows. For each set $S_i \in \mathcal{S}$, create a clause $$C_i = \bigvee_{j\in [n]\setminus S_i}x_j$$ and let $F = \bigwedge_i C_i$. Note that $F$ has width $k$ as $|S_i| = n-k$. Consider an arbitrary assignment~$\alpha$ such that $\wt(\alpha) = t-1$. Let $A$ be the set of variables that are set to $1$ by~$\alpha$. Since $\mathcal{S}$ is a Steiner system, there exists a unique set $S_i$ which covers $A$. This implies that~$C_i$ is not satisfied by $\alpha$, and thereby that $F$ is $t$-admissible.

We now show that $|\sat_t(F)| \geq \frac{k}{t} \cdot \binom{n}{t-1}$. Consider the (undirected) bipartite graph $G_t$ with vertex parts $V$ and $W$ defined by
\begin{itemize}
\item $V = \{ \alpha \in \{0, 1\}^n : \text{$\wt(\alpha) = t$ and $\alpha \in \unsat(F)$} \}$,
\item $W = \{ \beta \in \{0, 1\}^n : \wt(\beta) = t - 1 \}$.
\end{itemize}
We connect an assignment $\alpha\in V$ to an assignment $\beta \in W$ if $\alpha$ is obtained by flipping one of the zero bits of $\beta$. On the one hand, it is obvious that each vertex in $V$ has degree exactly $t$. On the other hand, each vertex $\beta \in W$ has degree exactly $n-t-k+1$ since there exists a unique clause not satisfied by $\beta$, and thus $\beta$ is adjacent to exactly $k$ satisfying assignments of weight~$t$. Hence, by double counting the edges in $G$ we get that $(n-t-k+1) \cdot \binom{n}{t-1} = t \cdot (\binom{n}{t} - |\sat_t(F)| )$ which proves the theorem.

\proofsubparagraph*{The Upper Bound.}
Next, we prove the upper bound $S(n, t, k) \leq \frac kt \cdot \binom{n}{t-1}$. Let $F$ be a $t$-admissible $k$-CNF formula with $|\sat_{t}(F)| = S(n, t, k)$ and let $\alpha$ be an assignment of weight $t-1$. Then, as $F(\alpha) = 0$, there exists a clause $C$ which is not satisfied by $\alpha$. Let $x_1, \dots, x_m$ denote the variables which are both not present in $C$ and turned off by $\alpha$. Then $m \ge n - k - (t-1) = n - (k+t-1)$. Now consider the assignments $\beta_1, \dots, \beta_m$ where $\beta_i$ turns on $x_i$ and assigns the rest of the variables are assigned as $\alpha$. It is immediate that $\wt(\beta_i) = t$ and $F(\beta_i) = 0$ for all $i$.

As this is true for all assignments $\alpha$ with weight $t-1$, we conclude that the number of rejected assignments is at least $\frac{n - (k+t-1)}t \binom{n}{t-1}$. Here the division by $t$ is necessary as we overcounted every assignment exactly $t$ times. It follows that
\begin{equation*}
    S(n, t, k) \le \binom{n}{t} - \frac{n - (k+t-1)}{t}\binom{n}{t-1} = \frac{k}{t} \cdot \binom{n}{t-1}. \qedhere
\end{equation*}
\end{proof}

An interesting observation about the above proof is that any formula with $S(n, t, k) = \frac{k}{t} \cdot \binom{n}{t-1}$  would in fact yield constructions of Steiner systems.
Only recently, the existence of Steiner systems $(n, q, r)$ for constants $q, r$ and sufficiently large $n$ satisfying the following divisibility condition: $\binom{q-i}{r-i} \textrm{ divides } \binom{n-i}{r-i} \textrm{ for } 0\le i\le r-1$ have been proven \cite{Keevash_2019_arxiv}.
This gives us the following corollary:

\begin{corollary}
Let $n > k+t$ where $t$ and $n-k$ are constants and $n$ is large enough and such that $\binom{n-k-i}{t-1-i}$ divides $\binom{n-i}{t-1-i}$ for $0\le i\le t-2$.
Then, $S(n, t, k) = \binom{n}{t-1}\cdot\frac{k}{t}$.
\end{corollary}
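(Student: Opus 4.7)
The plan is to derive the corollary as an immediate combination of \cref{thm:steiner} and the asymptotic existence theorem for Steiner systems of Keevash~\cite{Keevash_2019_arxiv}. \cref{thm:steiner} already reduces the computation of $S(n,t,k)$ in this regime to the existence of a single Steiner system, so the only task is to verify that the divisibility hypothesis of the corollary is exactly the right one to activate Keevash's result for the relevant parameters.

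First I would fix the correspondence of parameters: set $q := n-k$ and $r := t-1$, so that a Steiner system $(n, n-k, t-1)$ in the sense of \cref{thm:steiner} is precisely a Steiner system $(n, q, r)$ in the sense of the result preceding the corollary. Under this substitution, Keevash's divisibility condition "$\binom{q-i}{r-i}$ divides $\binom{n-i}{r-i}$ for $0 \le i \le r-1$" becomes "$\binom{n-k-i}{t-1-i}$ divides $\binom{n-i}{t-1-i}$ for $0 \le i \le t-2$", which is exactly the hypothesis of the corollary.

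Next I would observe that $q = n-k$ and $r = t-1$ are both constants (since the corollary assumes $t$ and $n-k$ are constants), and $n$ is taken sufficiently large. Hence the hypotheses of Keevash's theorem are met, and we obtain the existence of a Steiner system with parameters $(n, n-k, t-1)$.

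Finally, plugging this Steiner system into \cref{thm:steiner} yields $S(n,t,k) = \frac{k}{t}\binom{n}{t-1}$, which is the claimed equality. Since both ingredients are already proved, there is no genuine obstacle; the only step to execute carefully is the parameter bookkeeping in the preceding paragraph, which verifies that the divisibility hypothesis stated in the corollary is the natural specialization of Keevash's condition to our setting.
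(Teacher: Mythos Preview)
Your proposal is correct and follows exactly the paper's approach: the corollary is stated immediately after recalling Keevash's existence theorem, with the implicit (and now by you explicit) substitution $q=n-k$, $r=t-1$ to match the divisibility hypothesis, and then applying \cref{thm:steiner}. There is nothing more to add.
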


\bibliographystyle{plain}
\bibliography{refs}{}

\end{document}